\theoremstyle{plain}
\newtheorem{theorem}{Theorem}[section]
\newtheorem{proposition}[theorem]{Proposition}
\newtheorem{lemma}[theorem]{Lemma}
\newtheorem{corollary}[theorem]{Corollary}
\theoremstyle{definition}
\newtheorem{definition}[theorem]{Definition}
\newtheorem{notation}[theorem]{Notation} 
\theoremstyle{remark}
\newtheorem{remark}[theorem]{Remark}
\newcommand{\lindexset}{{[2]\times[2]}}
\newcommand{\indexset}{{[p]\times\lindexset}}
\newcommand{\Eindexset}{{E\times\lindexset}}
\newcommand{\Findexset}{{F\times\lindexset}}
\newcommand{\eindexset}{{\{e\}\times\lindexset}}
\newcommand{\findexset}{{\{f\}\times\lindexset}}
\newcommand{\gindexset}{{\{g\}\times\lindexset}}
\newcommand{\hindexset}{{\{h\}\times\lindexset}}
\newcommand{\eeindexset}{{\{\epsilon(e)\}\times\lindexset}}
\renewcommand{\epsilon}{\varepsilon}
\renewcommand{\phi}{\varphi}
\DeclareMathOperator{\ADF}{ADF}
\DeclareMathOperator{\Con}{Con}
\DeclareMathOperator{\Sep}{Sep}
\DeclareMathOperator{\Seq}{Seq}
\DeclareMathOperator{\gelo}{GELO}
\DeclareMathOperator{\Part}{Part}
\DeclareMathOperator{\ssac}{SSAC}
\DeclareMathOperator{\As}{As}
\DeclareMathOperator{\Sat}{Sat}
\DeclareMathOperator{\Wr}{Wr}
\DeclareMathOperator{\Isom}{Isom}
\DeclareMathOperator{\Sols}{Sols}
\DeclareMathOperator{\Span}{span}
\DeclareMathOperator{\Stab}{Stab}
\DeclareMathOperator{\Co}{Co}
\DeclareMathOperator{\Fo}{Fo}
\newcommand{\mso}{[\kern-1.75pt [}
\newcommand{\msc}{]\kern-1.75pt ]}
\newcommand{\ms}[1]{\mso #1 \msc}
\newcommand{\E}{{\mathbb E}}
\newcommand{\N}{{\mathbb N}}
\newcommand{\Q}{{\mathbb Q}}
\newcommand{\Z}{{\mathbb Z}}
\newcommand{\cC}{{\mathcal C}}
\newcommand{\cD}{{\mathcal D}}
\newcommand{\cM}{{\mathcal M}}
\newcommand{\cP}{{\mathcal P}}
\newcommand{\cQ}{{\mathcal Q}}
\newcommand{\cR}{{\mathcal R}}
\newcommand{\cS}{{\mathcal S}}
\newcommand{\cT}{{\mathcal T}}
\newcommand{\cU}{{\mathcal U}}
\newcommand{\cV}{{\mathcal V}}
\newcommand{\cW}{{\mathcal W}}
\newcommand{\cWp}{\cW^{(p)}}
\newcommand{\fP}{{\mathfrak P}}
\newcommand{\fQ}{{\mathfrak Q}}
\newcommand{\card}[1]{\left|{#1}\right|}
\newcommand{\conj}[1]{\overline{#1}}
\newcommand{\expv}[1]{{\mathbf E}_{#1}}
\newcommand{\ev}{\E_f^\ell}
\newcommand{\mom}[1]{\mu_{#1,f}^\ell}
\newcommand{\smom}[1]{\tilde{\mu}_{#1,f}^\ell}
\newcommand{\sums}[1]{\sum_{\substack{#1}}}
\newcommand{\ceil}[1]{\lceil{#1}\rceil}
\title[Limiting Moments of Autocorrelation Demerit Factors]{Limiting Moments of Autocorrelation Demerit Factors of Binary Sequences}
\author{Daniel J.\ Katz}
\address{Department of Mathematics, California State University, Northridge, \: United States}
\author{Miriam E.\ Ramirez}
\thanks{Daniel J.\ Katz is with the Department of Mathematics, California State University, Northridge.  Miriam E.\ Ramirez was with the Department of Mathematics, California State University, Northridge, USA.  This paper is based upon work of both authors supported in part by the National Science Foundation under Grants DMS-1500856 and CCF-1815487, and by work of Daniel J.\ Katz supported in part by the National Science Foundation under Grant CCF-2206454.}
\date{24 October 2024}
\begin{document}

\begin{abstract}
Various problems in engineering and natural science demand binary sequences that do not resemble translates of themselves, that is, the sequences must have small aperiodic autocorrelation at every nonzero shift.
If $f$ is a sequence, then the demerit factor of $f$ is the sum of the squared magnitudes of the autocorrelations at all nonzero shifts for the sequence obtained by normalizing $f$ to unit Euclidean norm.
The demerit factor is the reciprocal of Golay's merit factor, and low demerit factor indicates low self-similarity of a sequence under translation.
We endow the $2^\ell$ binary sequences of length $\ell$ with uniform probability measure and consider the distribution of their demerit factors.
Earlier works used combinatorial techniques to find exact formulas for the mean, variance, skewness, and kurtosis of the distribution as a function of $\ell$.
These revealed that for $\ell \geq 4$, the $p$th central moment of this distribution is strictly positive for every $p \geq 2$.
This article shows that for every $p$, the $p$th central moment is $\ell^{-2 p}$ times a quasi-polynomial function of $\ell$ with rational coefficients.
It also shows that, in the limit as $\ell$ tends to infinity, the $p$th standardized moment is the same as that of the standard normal distribution.  
\end{abstract}

\maketitle

\section{Introduction}

This paper is concerned with the aperiodic autocorrelation of binary sequences, which is important in communications and ranging applications for establishing accurate timings; see \cite{Golomb,Golomb-Gong,Schroeder}.
Because we are viewing sequences aperiodically, we define a {\it binary sequence of length $\ell$} to be of the form $f=(\ldots,f_{-2},f_{-1},f_0,f_1,f_2,\ldots)$, where $f_0,f_1,\ldots,f_{\ell-1} \in \{-1,1\}$ and $f_j=0$ for all other $j \in \Z$.
This definition makes it easy to compare the sequence with non-cyclic translates of itself.
The {\it aperiodic autocorrelation at shift $s$} of the sequence $f$ is defined to be
\[
C_f(s) = \sum_{j \in \Z} f_{j+s} \conj{f_j}.
\]
This is always a finite value because $f_j\not=0$ for only finitely many $j$, which also guarantees that $C_f(s)=0$ for all but finitely many $s$.
Note that $C_f(0)$ is the squared Euclidean norm of $f$, so it is always a nonnegative real number (and strictly positive when $f\not=0$).

When a sequence $f$ is used in applications, it is desirable that $|C_f(s)|$ be small compared to $C_f(0)$ for all nonzero shifts $s$; this helps establish correct timing when the sequence is used for synchronization.
One measure of lowness of autocorrelation of $f$ is the {\it peak sidelobe level of $f$}, which is $\max_{s \in \Z\smallsetminus\{0\}} |C_f(s)|$; this measures the worst case (highest) autocorrelation value off the main lobe (shift $0$).
A mean square measure of lowness of autocorrelation is the demerit factor of $f$.
The {\it (autocorrelation) demerit factor of $f$} is the sum of the squared magnitudes of the autocorrelation at every nonzero shift $s$ of the sequence one obtains by normalizing $f$ to unit Euclidean norm.
That is,
\begin{equation}\label{Stan}
\ADF(f)= \frac{\sum_{s \in \Z \smallsetminus\{0\}} |C_f(s)|^2}{C_f(0)^2} = -1 + \frac{\sum_{s \in \Z} |C_f(s)|^2}{C_f(0)^2}.
\end{equation}
Golay's merit factor \cite{Golay-75} is the reciprocal of the demerit factor, so favorable sequences have a low demerit factor and a high merit factor.
We often want to study only the numerator of the last fraction in \eqref{Stan}, which is the sum of the squared magnitudes of all the autocorrelation values, so we define
\[
\ssac(f)=\sum_{s \in \Z} |C_f(s)|^2,
\]
so that
\begin{equation}\label{Gabrielle}
\ADF(f)=-1+\frac{\ssac(f)}{C_f(0)^2}.
\end{equation}

For each positive integer $\ell$, we are interested in the distribution of demerit factors of the set  $\Seq(\ell)$ of the $2^\ell$ binary sequences of length $\ell$, endowed with uniform probability distribution.
The expected value of a random variable $v$ with respect to this distribution is denoted $\ev v(f)=\expv{f \in \Seq(\ell)}(v(f))$.
Then the $p$th central moment is denoted
\[
\mom{p} v(f) = \ev \left(v(f)-\ev v(f)\right)^p,
\]
and the $p$th standardized moment is
\[
\smom{p} v(f) = \frac{\mom{p} v(f)}{\left(\mom{2} v(f)\right)^{p/2}}.
\]
Since $C_f(0)=\ell$ for every binary sequence $f$ of length $\ell$, we have $\ADF(f)=-1+\ssac(f)/\ell^2$ by \eqref{Gabrielle}, so that
\begin{equation}\label{Natasha}
\mom{p} \ADF(f) = \frac{\mom{p}\ssac(f)}{\ell^{2 p}},
\end{equation}
and of course $\smom{p} \ADF(f) = \smom{p} \ssac(f)$.

Now we discuss previous results on moments of the autocorrelation demerit factor.
The mean demerit factor was determined by Sarwate \cite[eq.\ (13)]{Sarwate}.
\begin{theorem}[Sarwate, 1984]
For every positive integer $\ell$, we have
\[
\ev \ADF(f) = 1-\frac{1}{\ell}.
\]
\end{theorem}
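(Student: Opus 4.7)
My plan is to use relation \eqref{Gabrielle} together with the fact that every binary sequence of length $\ell$ has squared norm $C_f(0) = \ell$, which reduces the problem to a single expectation:
\[
\ev \ADF(f) = -1 + \frac{\ev \ssac(f)}{\ell^2}.
\]
So it suffices to show $\ev \ssac(f) = 2\ell^2 - \ell$, after which the claimed formula $1 - 1/\ell$ follows by arithmetic.

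To compute $\ev \ssac(f)$, I would expand
\[
\ssac(f) = \sum_{s \in \Z} \sum_{j,k \in \Z} f_{j+s}\, f_j\, f_{k+s}\, f_k
\]
(no conjugation is needed, since binary sequences are real) and swap summation with expectation. Because the $f_j$ for $0 \le j < \ell$ are independent Rademacher variables and $f_j = 0$ outside that range, a term contributes $1$ to the expectation if every one of the four indices $j+s, j, k+s, k$ lies in $[0,\ell)$ and the multiset $\{j+s,j,k+s,k\}$ has each value occurring with even multiplicity; otherwise it contributes $0$. The key combinatorial step is then a short case analysis of which such pairings are consistent with a given shift $s$.

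I would separate the shift $s=0$, which contributes $\sum_{0 \le j,k < \ell} f_j^2 f_k^2 = \ell^2$ deterministically, from $s \ne 0$. For $s \ne 0$, the only surviving pairing is $j=k$: total coincidence needs $s=0$, the pairing $j+s = j$ (or $k+s = k$) needs $s=0$, and the pairing $j+s = k$ with $j=k+s$ forces $s = -s = 0$. Under $j=k$ the summand is $f_j^2 f_{j+s}^2$, whose indices both lie in $[0,\ell)$ for precisely $\ell - |s|$ values of $j$ when $|s| < \ell$ and for none otherwise. Summing over shifts gives $\sum_{0 < |s| < \ell}(\ell - |s|) = \ell^2 - \ell$ from the nonzero shifts, hence $\ev \ssac(f) = \ell^2 + (\ell^2-\ell) = 2\ell^2 - \ell$, as required.

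There is no real obstacle: the only thing to watch is the enumeration of admissible pairings for nonzero $s$, which is a finite check on four indices. Everything else is linearity of expectation and counting admissible ranges of $j$. I expect the same bookkeeping, carried out on the much longer product $\prod_{i=1}^{p} f_{j_i+s_i} f_{k_i+s_i}$, to be the serious work needed later for the higher moments that are the actual subject of the paper.
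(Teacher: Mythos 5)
Your computation is correct. Note, however, that the paper does not actually prove this theorem: it is quoted as Sarwate's 1984 result and cited to \cite{Sarwate}, so there is no in-paper proof to compare against. Your argument---expand $\ssac(f)=\sum_s C_f(s)^2$ into a fourfold sum of products $f_{j+s}f_jf_{k+s}f_k$, use independence of the Rademacher variables to reduce $\ev$ of each term to an even-multiplicity (pairing) condition on the index multiset $\ms{j+s,j,k+s,k}$, separate the deterministic $s=0$ term ($\ell^2$), and for $s\neq 0$ observe that only the pairing $j=k$ survives, giving $\sum_{0<|s|<\ell}(\ell-|s|)=\ell^2-\ell$---is the standard elementary proof and yields $\ev\ssac(f)=2\ell^2-\ell$, hence $\ev\ADF(f)=1-1/\ell$ via \eqref{Gabrielle}. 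Your enumeration of admissible pairings is complete: the pairings $j+s=j$, $k+s=k$, the crossed pairing $j+s=k,\ j=k+s$, and the total coincidence all force $s=0$, leaving only $j=k$ for nonzero shifts. Conceptually, what you do by hand here is the $p=1$ raw-moment instance of the bookkeeping that the paper systematizes (for central moments) through the assignment/partition machinery of \cref{Sanri}: your ``pairings with even multiplicity'' correspond to their even partitions of the index set, and your range constraints $j,j+s\in[0,\ell)$ correspond to the $\As(\cdot,=,\ell)$ conditions. The paper takes the closed form for the mean as given precisely because this direct calculation is short, whereas the higher central moments require the more structured apparatus.
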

The variance was explicitly determined by Jedwab \cite[Thm.\ 1]{Jedwab}.
\begin{theorem}[Jedwab, 2024]\label{Albert}
For every positive integer $\ell$, we have
\[
\mom{2} \ADF(f) = \begin{cases}
\frac{16 \ell^3-60 \ell^2+56 \ell}{3 \ell^4} & \text{if $\ell$ is even,} \\
\frac{16 \ell^3-60 \ell^2+56 \ell-12}{3 \ell^4} & \text{if $\ell$ is odd.}
\end{cases}
\]
\end{theorem}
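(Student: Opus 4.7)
The plan is to reduce to a combinatorial count of tuples with balanced multiplicities and then evaluate it partition by partition. By \eqref{Natasha} with $p=2$ we have $\mom{2}\ADF(f) = \mom{2}\ssac(f)/\ell^{4}$, so it suffices to compute $\mom{2}\ssac(f) = \ev\ssac(f)^2 - (\ev\ssac(f))^2$. From \eqref{Gabrielle} and Sarwate's theorem, $\ev\ssac(f) = \ell^2(1+\ev\ADF(f)) = 2\ell^2-\ell$, so the real work is to evaluate $\ev\ssac(f)^2$.

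Expanding from $C_f(s)=\sum_j f_{j+s}f_j$ yields
\[
\ssac(f)^2 = \sum_{s_1,s_2\in\Z}\sum_{j_1,k_1,j_2,k_2} f_{j_1+s_1}f_{j_1}f_{k_1+s_1}f_{k_1}f_{j_2+s_2}f_{j_2}f_{k_2+s_2}f_{k_2},
\]
where each variable ranges so that all eight subscripts lie in $\{0,\ldots,\ell-1\}$. Because each $f_j$ is an independent uniform $\pm1$ random variable with $f_j^2=1$ and $\ev f_j=0$, the expectation of each monomial is the indicator that every $f_j$ appears to an even power. Consequently $\ev\ssac(f)^2$ equals the number of tuples $(s_1,s_2,j_1,k_1,j_2,k_2)$ in the prescribed range for which every element of the multiset $M=\{j_1+s_1,j_1,k_1+s_1,k_1,j_2+s_2,j_2,k_2+s_2,k_2\}$ has even multiplicity.

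To evaluate this count, I would enumerate tuples by the set partition $\pi$ of the eight subscript slots induced by the equal-value relation; only partitions whose blocks all have even size contribute. For each such $\pi$, the defining equalities impose linear relations on $(s_1,s_2,j_1,k_1,j_2,k_2)$, and after eliminating dependent variables the free parameters range over intervals whose lengths are polynomials in $\ell$. M\"obius inversion over the partition lattice then assigns each $\pi$ its net contribution with the correct sign, and the $\ell^4$ terms across all partitions should sum to $4\ell^4$, matching the leading term of $(\ev\ssac(f))^2=4\ell^4-4\ell^3+\ell^2$ and ensuring that the variance is a cubic in $\ell$.

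The main obstacle is the case analysis, and in particular the partitions that are sensitive to the parity of $\ell$. These arise when the defining equalities force a linear relation like $2s=u$ on an integer variable $s$ with $u$ ranging over $\{0,\ldots,\ell-1\}$, so that the number of integer solutions differs by a fixed amount between even and odd $\ell$. I expect the constant discrepancy $-12$ in the numerator of the odd-case formula to emerge as the net contribution of these finitely many parity-sensitive partitions. Once all contributions are summed and divided by $\ell^4$, the variance of $\ADF$ reduces to the two-case formula stated in the theorem.
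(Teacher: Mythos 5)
The paper does not prove this theorem; it is cited directly to Jedwab (\cite{Jedwab}), so there is no proof of the paper's own to compare yours against. What the paper does offer is the general machinery (notably \cref{Sanri}, taken from \cite{Katz-Ramirez}) which expresses $\mom{p}\ssac(f)$ as a sum of $|\As(\cP,=,\ell)|$ over contributory partitions $\cP$, and for $p=2$ this machinery recovers Jedwab's formula. Your plan is in the same spirit: expand $\ssac(f)^2$ as an $8$-fold sum, observe that only monomials with every index appearing to an even power survive the expectation, stratify by the partition of the eight subscript slots, and compute via M\"obius inversion, noting that equations of the form $A+B=2C$ (cf.\ \cref{Persephone}) produce the parity dependence.

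The genuine gap is that you stop exactly where the proof begins. You correctly set up the reduction, correctly compute $\ev\ssac(f)=2\ell^2-\ell$, and correctly identify the counting problem, but you never enumerate the even partitions of the eight slots compatible with the two linear constraints $a_1-a_2=a_3-a_4$ and $a_5-a_6=a_7-a_8$, never compute the count for each, never perform the inclusion--exclusion, and never verify that the totals assemble to $\tfrac{1}{3}(16\ell^3-60\ell^2+56\ell)$ for even $\ell$ and $\tfrac{1}{3}(16\ell^3-60\ell^2+56\ell-12)$ for odd $\ell$. The sentence beginning ``The main obstacle is the case analysis'' concedes that the entire substance of the argument is missing; as it stands this is a plausible strategy outline, not a proof. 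One smaller remark: working with the raw second moment $\ev\ssac(f)^2$ and subtracting $(\ev\ssac(f))^2$ is fine, but the paper's formalism handles the centering automatically, since the contributory (globally even, locally odd) condition built into \cref{Sanri} is precisely what kills the cross terms that would otherwise have to be cancelled by hand against $(\ev\ssac(f))^2$; adopting that formalism would shorten the case analysis considerably.
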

Note that this shows that the variance tends to $0$ as $\ell$ tends to infinity, a fact earlier proved by Borwein and Lockhart \cite[pp.\ 1469--1470]{Borwein-Lockhart}.
In \cite[Thm.\ 1.3]{Katz-Ramirez}, a combinatorial theory was developed for determining the $p$th central and standardized moments of the demerit factor, which enabled the explicit calculation of the skewness 
\begin{theorem}[Katz--Ramirez, 2024]\label{Barbara}
For every positive integer $\ell$, we have 
\[
\mom{3} \ADF(f)=
\begin{cases}
\frac{160\ell^4-1296\ell^3+3296\ell^2-2496\ell}{\ell^6}, &  \text{if $\ell \equiv 0 \bmod 4$,} \\
\frac{160\ell^4-1296\ell^3+3296\ell^2-2736\ell+576}{\ell^6}, &  \text{if $\ell \equiv \pm 1 \bmod 4$,} \\ 
\frac{160\ell^4-1296\ell^3+3296\ell^2-2496\ell-384}{\ell^6}, &  \text{if $\ell \equiv 2 \bmod 4$,}
\end{cases}
\]
and
\[
\smom{3} \ADF(f)=
\begin{cases}
\frac{6\sqrt{3}(10\ell^4-81 \ell^3+206 \ell^2-156 \ell)}{(4 \ell^3-15 \ell^2+14 \ell)^{3/2}}, &  \text{if $\ell \equiv 0 \bmod 4$,} \\
\frac{6\sqrt{3}(10\ell^4-81 \ell^3+206 \ell^2-171 \ell+36)}{(4 \ell^3-15 \ell^2+14 \ell -3)^{3/2}}, &  \text{if $\ell \equiv \pm 1 \bmod 4$,} \\
\frac{6\sqrt{3}(10\ell^4-81 \ell^3+206 \ell^2-156 \ell-24)}{(4 \ell^3-15 \ell^2+14 \ell)^{3/2}}, &  \text{if $\ell \equiv 2 \bmod 4$.}
\end{cases}
\]
\end{theorem}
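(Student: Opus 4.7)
The plan is to first reduce to computing $\mom{3}\ssac(f)$, since \eqref{Natasha} gives $\mom{3}\ADF(f)=\mom{3}\ssac(f)/\ell^{6}$, and the standardized moment formula then follows algebraically from this together with $\mom{2}\ssac(f)=\ell^{4}\mom{2}\ADF(f)$ supplied by \cref{Albert}. Expanding $\ssac(f)=\sum_{s,j,k} f_{j+s}f_{j}f_{k+s}f_{k}$ shows that $\ssac(f)^{3}$ is a sum of products of twelve $\pm 1$ values of $f$, indexed by a $9$-tuple $(s_{1},s_{2},s_{3},j_{1},k_{1},j_{2},k_{2},j_{3},k_{3})$ subject to each of the twelve positions $j_{i}+s_{i},k_{i}+s_{i},j_{i},k_{i}$ lying in $[0,\ell)$. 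Because the $f_m$ are independent uniform $\pm 1$ signs, $\ev [f_{a_1}\cdots f_{a_{12}}]$ equals $1$ if each distinct value in $\{a_1,\ldots,a_{12}\}$ appears with even multiplicity and equals $0$ otherwise, so $\ev [\ssac(f)^{3}]$ reduces to an enumeration of index patterns under an even-multiplicity constraint. Forming the combination $\mom{3}\ssac(f)=\ev[\ssac(f)^{3}]-3\ev[\ssac(f)]\ev[\ssac(f)^{2}]+2\ev[\ssac(f)]^{3}$ cancels exactly those patterns in which at least one correlation factor $C_f(s_i)^2$ pairs its four positions entirely among themselves, leaving a sum over ``connected'' patterns that span all three factors.

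Each surviving pattern can be encoded as a matching on the multiset of twelve position-expressions and imposes a system of linear equations on the nine shift and index variables. For a fixed pattern, the number of integer solutions satisfying these equations together with the interval constraints $0 \le j_i+s_i, k_i+s_i, j_i, k_i < \ell$ is an Ehrhart quasi-polynomial in $\ell$ whose degree equals the dimension of the associated polytope (at most $4$ in the connected case). Summing over all patterns produces a quasi-polynomial of degree $4$ in $\ell$ whose period divides $4$, and this common period accounts for the three-case ($\ell \equiv 0, \pm 1, 2 \pmod 4$) conclusion. Dividing by $\ell^{6}$ yields the formula for $\mom{3}\ADF(f)$, and inserting it alongside \cref{Albert} into $\smom{3}\ADF(f)=\mom{3}\ADF(f)/(\mom{2}\ADF(f))^{3/2}$ recovers the stated standardized moment.

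The main obstacle is the enumerative bookkeeping. The combinatorial structures classifying the relevant patterns are numerous, and the most delicate issue arises at the degenerate strata where matched position-pairs that are nominally distinct actually coincide, producing lower-dimensional contributions that must be separated using inclusion-exclusion. Moreover, the residue-dependent corrections arise from constraints (such as $j_{1}+s_{1}=k_{2}+s_{2}$ combined with $j_{1}=j_{2}+s_{3}$) whose solution counts in $[0,\ell)$ genuinely differ according to divisibility properties of $\ell$; it is precisely the accumulation of these corrections across many patterns that yields the three distinct quasi-polynomial representatives in the residue classes of $\ell$ modulo $4$.
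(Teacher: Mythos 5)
Your outline tracks the same combinatorial framework that the paper uses (and that this paper restates in \cref{Celia} from the cited Katz--Ramirez work): expand $\ev[\ssac(f)^3]$ as a sum of expectations of products of twelve signs, use the fact that such an expectation is nonzero exactly when every position has even multiplicity, form the third-central-moment combination, and count integer solutions to the resulting linear systems. The reduction to $\mom{3}\ssac(f)$ via \eqref{Natasha} and the algebraic passage to $\smom{3}\ADF$ using \cref{Albert} are both correct.

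However, the proposal as written has two genuine gaps that the paper's machinery is specifically designed to close. First, your cancellation claim --- that the central-moment combination ``cancels exactly those patterns in which at least one factor $C_f(s_i)^2$ pairs its four positions entirely among themselves'' --- describes a connectedness condition, but the actual set of surviving partitions in \cref{Sanri} is $\Con(p)=\Sat(p)\cap\gelo(p)$. The GELO condition ($\cP_{\{e\}}$ not even for every $e$) is strictly stronger than ``factor $e$ is not isolated'': a class can contribute an even number of positions to factor $e$ while also meeting other factors. For $p=3$ it happens that connected, globally-even, satisfiable partitions are automatically GELO (via the structural \cref{Epazote}), but this is a theorem, not something that falls out of the cumulant identity alone, and your argument doesn't establish it. Second, invoking Ehrhart theory tells you that each $|\As(\cP,=,\ell)|$ is a quasi-polynomial (this is exactly \cref{Daphne} in the present paper), but it gives neither the degree-$4$ coefficients nor the fact that the period divides $4$; those are observations read off from the final answer, not consequences of the argument. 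The actual proof in the prior work carries out the explicit enumeration of the isomorphism classes of contributory partitions for $p=3$ and computes $\Sols(\fP,\ell)$ for each one, which is the ``enumerative bookkeeping'' your last paragraph defers; without that enumeration the stated formulas are not derived.
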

The previous paper also had a computer-assisted calculation of the kurtosis of the distribution of the demerit factor \cite[Thm.\ 1.4]{Katz-Ramirez}.
Another interesting fact that arises from this combinatorial theory is that all central moments are positive, except for a few that are zero when $\ell$ is small \cite[Thm.\ 1.5]{Katz-Ramirez}.
\begin{theorem}[Katz--Ramirez, 2024]
Let $\ell$ and $p$ be positive integers.
Then $\mom{p} \ADF(f)$ is nonnegative.
If either (i) $p=1$, (ii) $p>1$ is odd and $\ell\leq 3$, or (iii) $p$ is even and $\ell\leq 2$, then $\mom{p} \ADF(f)$ is zero; otherwise it is strictly positive.
\end{theorem}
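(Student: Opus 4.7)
The proof splits by parity of $p$ and by the size of $\ell$. The $p=1$ case is immediate from the definition of central moment. For even $p$, $(\ADF(f)-\ev\ADF(f))^p\ge 0$ pointwise, so $\mom{p}\ADF(f)\ge 0$ automatically, with equality precisely when $\ADF$ is constant on $\Seq(\ell)$. A direct enumeration shows $\ADF(f)\equiv 0$ on $\Seq(1)$ and $\ADF(f)\equiv 1/2$ on $\Seq(2)$, disposing of $\ell\le 2$. For $\ell\ge 3$, \Cref{Albert} gives $\mom{2}\ADF(f)>0$, so $\ADF$ is non-constant on the finite, uniformly weighted set $\Seq(\ell)$, and then $(\ADF(f)-\ev\ADF(f))^p$ is strictly positive for at least one sequence and nonnegative for all, yielding $\mom{p}\ADF(f)>0$ for every even $p\ge 2$.

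For odd $p>1$ with $\ell\le 3$, the cases $\ell\le 2$ are settled as above. For $\ell=3$, a short enumeration of the eight sequences shows that $\ssac(f)=19$ whenever $f_0=f_2$ and $\ssac(f)=11$ whenever $f_0\ne f_2$, each value arising exactly four times. This two-point distribution is symmetric about its mean $15$, so $\ADF(f)-\ev\ADF(f)$ is symmetric about $0$ and every odd central moment vanishes.

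The main obstacle is the remaining case: odd $p\ge 3$ with $\ell\ge 4$, where I must prove strict positivity (which also yields the nonnegativity asserted by the theorem). Here I would build on the combinatorial machinery that produced the skewness formula in \Cref{Barbara}. Starting from
\[
\ssac(f)=\sums{a,b,c,d\in\Z \\ a+d=b+c}f_af_bf_cf_d,
\]
one unfolds $\mom{p}\ssac(f)$ as a signed sum over ordered $p$-tuples of admissible quadruples, implementing the mean subtraction by M\"obius inversion on the partition lattice of the $p$ factors so that only ``connected'' configurations survive. Taking expectations over the independent $\pm 1$ coordinates $f_0,\ldots,f_{\ell-1}$ then restricts the sum to configurations in which every coordinate appears an even number of times.

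The hardest step is showing, class by class, that each equivalence class of connected configurations contributes a nonnegative amount: the individual summands carry mixed signs, and cancellation into a nonnegative total is subtle. My plan is to rewrite each class as a manifestly nonnegative object, for instance a count of lattice points in a rational polytope or a sum of squares of subsidiary quantities, and then to exhibit a specific ``triangle-type'' connected pattern whose count is strictly positive whenever $\ell\ge 4$, mirroring the combinatorial mechanism behind the positive leading coefficient in \Cref{Barbara} at $p=3$. For odd $p\ge 5$ I expect a direct construction of such a triangle-bearing pattern at each $p$, or an inductive bootstrap from the $p=3$ case combined with the already-established strict positivity of the even moments, to complete the argument.
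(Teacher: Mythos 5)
Your treatment of $p=1$, of $\ell\le 2$, of even $p\ge 2$ with $\ell\ge 3$ (via \cref{Albert}), and of the $\ell=3$ enumeration (where $\ssac$ takes the two values $11$ and $19$ equally often, hence is symmetric about its mean) is sound. The genuine gap is precisely the case you yourself flag as the main obstacle: odd $p\ge 3$ with $\ell\ge 4$. What you offer there is a plan, not a proof (``My plan is to\ldots'', ``I expect\ldots''), and it also misidentifies where the difficulty lies. You worry about showing ``class by class, that each equivalence class of connected configurations contributes a nonnegative amount'' because of ``mixed signs'' and ``subtle'' cancellation. But \cref{Sanri} (restating Prop.~3.1 of the cited reference) already does that work once and for all: it writes
\[
\mom{p}\ssac(f)=\sum_{\cP\in\Con(p)}|\As(\cP,=,\ell)|,
\]
a sum of set cardinalities with no residual signs to cancel, so nonnegativity of $\mom{p}\ssac(f)$, and hence of $\mom{p}\ADF(f)$ by \eqref{Natasha}, is immediate and uniform in $p$ --- no parity split and no appeal to pointwise nonnegativity of even powers is needed. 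Given that formula, the only real content is the strict-positivity/vanishing dichotomy: one must (a) exhibit, for each $p\ge 2$ and each $\ell$ above the stated threshold ($\ell\ge 3$ for even $p$, $\ell\ge 4$ for odd $p$), a contributory partition $\cP\in\Con(p)$ with $\As(\cP,=,\ell)\ne\emptyset$, and (b) show that $\As(\cP,=,\ell)=\emptyset$ for every $\cP\in\Con(p)$ in the small-$\ell$ cases. You gesture at a ``triangle-type connected pattern'' whose count is positive for $\ell\ge 4$, but you never construct it, never verify that it is contributory in the sense of the paper (globally even, locally odd, satisfiable), and never explain why $\ell\ge 4$ is exactly the threshold. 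Until that construction is carried out for all odd $p\ge 3$, the theorem is not proved.
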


Two important questions could not be answered using only the techniques of \cite{Katz-Ramirez}.
First, one notes from Theorems \ref{Albert} and \ref{Barbara} (using \eqref{Natasha}) that the central moments $\mom{p} \ssac(f)$ for $p=2$ and $p=3$ are quasi-polynomial functions of $\ell$.
In this paper, we develop techniques that show that this is true for all $p$.
\begin{theorem}\label{Florence}
Let $p \in \N$.  Then $\mom{p} \ssac(f)$ is a quasi-polynomial function of $\ell$ whose coefficients are all rational numbers.
Therefore, $\mom{p} \ADF(f)$ is $\ell^{-2 p}$ times a quasi-polynomial function of $\ell$ whose coefficients are all rational numbers.
\end{theorem}

The second question is about the general behavior of the higher moments ($p \geq 3$).
In principle, the methods of \cite{Katz-Ramirez} can be used to furnish an exact determination of any $p$th standardized moment of the distribution of demerit factors, but the calculations become more and more lengthy and complicated as $p$ increases.
To better understand the overall behavior of the moments, one wants to understand their asymptotic behavior in the limit of large sequence length $\ell$.
One cannot determine the limiting behavior using solely the methods of \cite{Katz-Ramirez}, but this paper provides further techniques that identify the dominant contributions and enable us to calculate the limiting moments.
\begin{theorem}\label{Andy}
Let $p$ be a nonnegative integer.  Then
\[
\lim_{\ell \to \infty} \smom{p} \ADF(f) = \lim_{\ell \to \infty} \smom{p} \ssac(f) =
\begin{cases}
0 & \text{if $p$ is odd,} \\
(p-1)!! & \text{if $p$ is even.}
\end{cases}
\]
\end{theorem}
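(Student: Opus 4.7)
The right-hand side is the moment sequence of the standard normal distribution, so my plan is to establish a central limit theorem for $\ssac(f)$ by the method of moments. The starting point is the representation
\[
\ssac(f)=\sum_{s,j,k\in\Z}f_{j+s}f_j f_{k+s}f_k,
\]
a quartic polynomial in the $\pm 1$-valued entries of $f$. Expanding the $p$th central moment yields a sum indexed by $p$-tuples of triples $(s_i,j_i,k_i)$, each contributing the expectation of a $4p$-fold monomial in the $f_j$'s. Under the uniform measure on $\{-1,1\}$, such an expectation equals $1$ if every index occurs to an even power and $0$ otherwise, and the mean-subtraction in the central moment cancels those tuples in which each quartic factor is internally balanced on its own; what survives are the genuine cross-coincidence configurations.

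I would then classify the surviving configurations by the combinatorial pattern of coincidences among the $4p$ indices, in the spirit of the framework developed in \cite{Katz-Ramirez}. Each pattern contributes a quasi-polynomial count in $\ell$ (via Theorem~\ref{Florence}), whose degree equals the number of free shift/index parameters remaining after the coincidence constraints are imposed. Since Theorem~\ref{Albert} gives $\mom{2}\ssac(f)\sim (16/3)\ell^3$, the target exponent for the $p$th central moment is $3p/2$, and the key claim is that this exponent is attained precisely by the ``Wick-type'' patterns in which the $p$ quartic factors are grouped into $p/2$ disjoint pairs, with each pair coupled in the unique way that generates the variance of $C_f(s)^2$ at a single nonzero shift $s$. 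Such patterns exist only when $p$ is even, forcing the odd-$p$ limit to vanish. Enumerating these patterns yields a factor of $(p-1)!!$, and each pair asymptotically contributes $\mom{2}\ssac(f)/\ell^3\to 16/3$, so
\[
\mom{p}\ssac(f)\sim (p-1)!!\,(16/3)^{p/2}\,\ell^{3p/2}=(p-1)!!\,(\mom{2}\ssac(f))^{p/2},
\]
and dividing through gives the stated standardized-moment limit.

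The main obstacle is the combinatorial upper bound: proving that \emph{every} coincidence pattern failing to be a perfect pairing of factors has strictly smaller exponent in $\ell$. My approach would be to associate to each pattern a graph whose vertices are the $p$ quartic factors and whose edges record shared indices, then argue that the number of free parameters is governed by graph-theoretic invariants (connected components, independent cycles) and is maximized uniquely by the graph consisting of $p/2$ disjoint edges; any additional edge or higher-order coincidence forces an additional equality among the index variables and costs at least one full degree. The edge effects arising from the finite support of $f$ contribute only lower-order quasi-polynomial corrections and therefore do not disturb the leading asymptotics. With the sharp upper bound in place, a direct enumeration of the Wick-type patterns and substitution of the leading coefficient from Theorem~\ref{Albert} completes the proof.
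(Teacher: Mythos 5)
Your proposal identifies the same high-level strategy the paper uses: reduce $\mom{p}\ssac(f)$ to a sum over coincidence patterns (partitions of the $4p$ index positions), show that the Wick-type pairings of the $p$ quartic factors are the dominant configurations, count them as $(p-1)!!$, show each contributes $\sim(16/3)\ell^3$, and verify that all other contributory patterns are of strictly lower order. You correctly flag the last step as the main obstacle, and this is indeed where nearly all of the paper's work lies (\cref{Heidi}, \cref{cookies}, \cref{Heinrich}). However, the graph-theoretic sketch you offer for it leaves a genuine gap. The relevant degree count is not simply an invariant of a $p$-vertex graph on the quartic factors: the paper tracks a partition $\cP$ of all $4p$ positions, and the number of free parameters for separable $\cP$ is $|\cP|-|\fP|$ where $\fP$ is the quotient by equation-identicality; showing this is maximized exactly at $3p/2$ by principal partitions requires the parity (GELO) constraint on class sizes together with the size-$3$/size-$4$ dichotomy from \cref{Rosemary}, and that analysis is not a routine ``components minus cycles'' computation on your factor graph. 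More importantly, your graph has an edge whenever two factors share an index, so it does not distinguish the separable case from the equation-imbricate (non-separable) case; the latter is not handled by the quotient argument at all and needs a separate linear-independence bound on the system $f_0=\cdots=f_{p-1}=0$ (this is what \cref{Heidi} does). Your heuristic ``any additional coincidence costs at least one full degree'' is essentially what has to be proved, and the proof is not immediate. Two smaller points: \cref{Florence} plays no role in the asymptotics and is a separate result, so citing it here is a red herring; and each factor-pairing is coupled not in a ``unique way'' but in $8^{p/2}$ ways forming a single isomorphism class (\cref{Henrietta}), which is where the $8^{p/2}$ combines with the per-pair count $(2/3)^{p/2}\ell^{3p/2}$ to give $(16/3)^{p/2}$.
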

Note that the limiting standardized moments are exactly the same as those of the standard normal distribution.

The rest of this paper is organized as follows.
\cref{Celia} has preliminary material and recalls the general theory of \cite{Katz-Ramirez} that enables us to calculate central moments of the distribution of demerit factors of binary sequences.
\cref{Ulrich} introduces the new ideas of this paper that are later used to prove Theorems \ref{Florence} and \ref{Andy}.
\cref{Quentin} applies this new theory to prove \cref{Florence}.
Then we investigate the dominant terms for a generic central moment in \cref{Andrew}, and determine the asymptotic standardized moments that are presented in \cref{Andy}.

\section{Preliminaries}\label{Celia}

In this section, we review the combinatorial theory of \cite{Katz-Ramirez} for calculating the moments of the distribution of demerit factors of binary sequences.
Throughout this paper, $\N=\{0,1,2,\ldots\}$ and for each $\ell\in\N$ we write $[\ell]$ for $\{0,1,\ldots,\ell-1\}$.

We use $\ms{a_1,\ldots,a_k}$ to denote the multiset $M$ where the multiplicity of an element $a$ in $M$ equals the number of times that $a$ occurs on the list $a_1,\ldots,a_k$.
If $\{a_i\}_{i \in I}$ is a family of (not necessarily distinct) elements that is indexed by the set $I$, then $\ms{a_i: i \in I}$ denotes the multiset in which the multiplicity of an element $a$ is the number of $i \in I$ such that $a_i=a$.

If $S$ and $T$ are sets, then $T^S$ is the set of functions from $S$ into $T$.
We sometimes abbreviate a tuple by omitting enclosing parentheses and commas (e.g., $(e,s,v)$ is abbreviated as $e s v$) between elements when there is no risk of confusion.

\subsection{Partitions}

A {\it partition} $\cP$ of a set $S$ is a set of nonempty, pairwise disjoint subsets (called the {\it classes of the partition}) of $S$ whose union is $S$.
The {\it modulo $\cP$ relation} is the equivalence relation on $S$ whose equivalence classes are the sets in $\cP$, i.e., if $s,t \in S$, then $s \equiv t \pmod{\cP}$ means that $s$ and $t$ lie in the same set within $\cP$.
If $f\colon S \to U$ is a function, the {\it partition induced by $f$} is the partition $\cP$ of $S$ where $s,t \in S$ lie in the same class of $\cP$ if and only if $f(s)=f(t)$.

\begin{definition}[Type of a partition]
If $\cP$ is a partition of a set $S$, then the {\it type of $\cP$} is the multiset of cardinalities of the classes in $\cP$, i.e., $\ms{|P|: P \in \cP}$.
\end{definition}

\begin{definition}[Even partition]
A partition is said to be {\it even} if every class in it is of finite, even cardinality.
\end{definition}

\begin{notation}[$\Part(E)$, $\Part(p)$]
If $E\subseteq \N$, then $\Part(E)$ is the set of all partitions of $\Eindexset$.
If $p \in \N$, then $\Part(p)$ is shorthand for $\Part([p])$.
\end{notation}

\begin{definition}[Restriction of sets and partitions]
Let $F \subseteq E \subseteq \N$.
If $P \subseteq \Eindexset$, then {\it the restriction of $P$ to $F$}, written $P_F$, is $P\cap(\Findexset)$.
If $\cP \in \Part(E)$, then {\it the restriction of $\cP$ to $F$}, written $\cP_F$, is the partition $\{P_F: P \in \cP \text{ and } P_F\not=\emptyset\}$, which resides in $\Part(F)$.
\end{definition}

\begin{definition}[Globally even, locally odd (GELO) partition]
Let $E \subseteq \N$ and $\cP \in \Part(E)$.
Then we call $\cP$ {\it globally even, locally odd} (or just {\it GELO}) to mean that $\cP$ is an even partition such that for every $e \in E$, the restricted partition $\cP_{\{e\}} $ is not even.
We use $\gelo(E)$ to denote the set of all GELO partitions of $\Eindexset$, and if $p \in \N$, then $\gelo(p)$ is shorthand for $\gelo([p])$.
\end{definition}

\subsection{Assignments}

Our calculations of moments of demerit factors are based on enumeration of objects called assignments, which we now define.
\begin{definition}[Assignment]
Let $E$ be a subset of $\N$.
An {\it assignment for $E$} is an element of $\N^\Eindexset$, i.e., a function from $\Eindexset$ into $\N$.
If $\tau$ is an assignment for $E$ and $(e,s,v)\in\Eindexset$, then we usually write $\tau_{e,s,v}$ (or $\tau_{e s v}$) rather than the $\tau(e,s,v)$ to denote the value of $\tau$ at $(e,s,v)$.  We also introduce the following as compact notations for sets of assignments that will be useful in the rest of the paper:
\begin{itemize}
\item $\As(E)=\N^\Eindexset$, the set of all assignments for $E$,
\item $\As(E,=)=\{\tau\in\As(E): \tau_{e 0 0} + \tau_{e 0 1} = \tau_{e 1 0} + \tau_{e 1 1} \text{ for every } e \in E\}$,
\item $\As(E,\ell)=\{\tau \in \As(E): \tau(\Eindexset) \subseteq [\ell]\}$, and
\item $\As(E,=,\ell) = \As(E,=) \cap \As(E,\ell)$.
\end{itemize}
If $\cP \in \Part(E)$, then we also define:
\begin{itemize}
\item $\As(\cP)=\{\tau\in\As(E): \tau_{e s v}\equiv \tau_{f t w} \text{ if and only if } (e,s,v)\equiv(f,t,w)\pmod{\cP}\}$,
\item $\As(\cP,=)=\As(\cP)\cap\As(E,=)$,
\item $\As(\cP,\ell)=\As(\cP)\cap\As(E,\ell)$,
\item $\As(\cP,=,\ell)=\As(\cP)\cap\As(E,=)\cap\As(E,\ell)$,
\end{itemize}
where we note that we can deduce $E$ from $\cP$ since $E$ is the set of projections onto the first coordinate of all the triples in the all the classes of $\cP$.
Also note that $\As(\cP)$ is the set of all assignments for $E$ that induce the partition $\cP$.
\end{definition}
The last part of this definition associates certain assignments to partitions of $\Eindexset$.
This association is used to define an important class of partitions.
\begin{definition}[Satisfiable partition]
Let $E \subseteq \N$ and $\cP \in \Part(E)$.
We say that $\cP$ is {\it satisfiable} to mean that $\As(\cP,=)\not=\emptyset$.
The set of all satisfiable partitions of $\Eindexset$ is denoted $\Sat(E)$, and $\Sat(p)$ is a shorthand for $\Sat([p])$ when $p\in\N$.
\end{definition}
\begin{definition}[Contributory partition]
Let $p \in \N$ and $\cP \in \Part(p)$.
We say that $\cP$ is {\it contributory} to mean that it is both satisfiable and globally even, locally odd.
The set of all contributory partitions of $\indexset$ is denoted $\Con(p)$, so that $\Con(p)=\Sat(p)\cap\gelo(p)$.
\end{definition}
A precise formula for the $p$th central moment of the demerit factor was obtained in \cite[Prop.\ 3.1]{Katz-Ramirez} using the combinatorial objects that we have just defined.
\begin{proposition}\label{Sanri}
If $p,\ell\in\N$, then
\[
\mom{p} \ssac(f) = \sum_{\cP \in \Con(p)} |\As(\cP,=,\ell)|.
\]
\end{proposition}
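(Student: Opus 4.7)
The plan is to expand $(\ssac(f)-\ev\ssac(f))^p$ as a sum indexed by assignments and then regroup by the partition of $\indexset$ each assignment induces via equality of values. To begin, one realizes $\ssac(f)$ itself as a sum over single-element assignments: writing $|C_f(s)|^2$ as a double sum and reindexing by $a=j+s$, $b=j$, $c=k+s$, $d=k$ turns the common-shift constraint $a-b=c-d$ into the column-sum constraint $a+d=b+c$, so after matching coordinates to $\tau_{e,s,v}$ one gets
\[
\ssac(f)=\sum_{\tau\in\As(\{e\},=,\ell)}\prod_{(s,v)\in\lindexset}f_{\tau_{e,s,v}}.
\]
Raising the centered quantity to the $p$th power and concatenating $p$ such single-element assignments into one assignment for $[p]$ then gives
\[
\mom{p}\ssac(f)=\sum_{\tau\in\As([p],=,\ell)}\ev\prod_{e\in[p]}\bigl(g_e(\tau)-\ev g_e(\tau)\bigr),
\]
where $g_e(\tau)=\prod_{(s,v)\in\lindexset}f_{\tau_{e,s,v}}$.

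Because the $f_j$ are i.i.d.\ uniform $\pm 1$ variables, the inner expectation sees only the pattern of coincidences among the values $\tau_{e,s,v}$, i.e., only the partition $\cP(\tau)\in\Part(p)$ whose classes are the fibers of $\tau$. Regrouping $\tau$ by $\cP(\tau)$ yields
\[
\mom{p}\ssac(f)=\sum_{\cP\in\Part(p)}|\As(\cP,=,\ell)|\,W(\cP),
\]
where $W(\cP)$ is the common value of the inner expectation over all $\tau$ that induce $\cP$.

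The heart of the argument, and the step I expect to be the main obstacle, is identifying $W(\cP)$. The key tool is the identity $f_j^2=1$ for $j\in[\ell]$: if $\cP_{\{e\}}$ is even for some $e$, then every $f_j$ occurring in $g_e(\tau)$ does so an even number of times, so $g_e(\tau)=1$ is \emph{deterministic}, making the centered factor $g_e(\tau)-\ev g_e(\tau)$ vanish identically and forcing $W(\cP)=0$. If, on the other hand, no $\cP_{\{e\}}$ is even, then some index has odd multiplicity in each $g_e(\tau)$, so $\ev g_e(\tau)=0$ for every $e$, whence
\[
W(\cP)=\ev\prod_{(e,s,v)\in\indexset}f_{\tau_{e,s,v}},
\]
which equals $1$ when $\cP$ is globally even and $0$ otherwise. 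Combining the two cases, $W(\cP)=1$ exactly when $\cP\in\gelo(p)$, and $W(\cP)=0$ otherwise. Finally, since $|\As(\cP,=,\ell)|=0$ whenever $\cP\notin\Sat(p)$, the sum collapses to $\cP\in\Con(p)=\gelo(p)\cap\Sat(p)$, as claimed. The substantive work lies in verifying that $W(\cP)$ depends only on $\cP$ and in the $f_j^2=1$ trivialization of any locally even restriction; the rest is bookkeeping about assignments and partitions.
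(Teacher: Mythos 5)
Your proof is correct. Note that the paper itself does not prove \cref{Sanri}; it is imported verbatim from the companion paper [Katz--Ramirez 2024, Prop.\ 3.1], so there is no in-text argument to compare against. That said, your derivation is the natural one and is almost certainly the argument in the cited source: expand $\ssac(f)$ as a sum of monomials $f_a f_b f_c f_d$ indexed by $(a,b,c,d)$ with $a+d=b+c$, raise the centered quantity to the $p$th power, regroup by the equality pattern of indices (which is exactly the partition $\cP$), and observe via $f_j^2=1$ and independence that the weight $W(\cP)$ is the indicator of $\cP$ being globally even and locally odd; restricting finally to $\Sat(p)$ because $\As(\cP,=,\ell)=\emptyset$ otherwise. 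The one place worth stating more explicitly is why $W(\cP)$ is well-defined as a function of $\cP$ alone: two assignments $\tau,\tau'$ inducing the same $\cP$ differ by postcomposition with a bijection of $[\ell]$, and the $f_j$ are i.i.d., so the relevant expectations coincide. You gesture at this (``sees only the pattern of coincidences'') but it deserves a sentence. Otherwise the bookkeeping is sound, and the two-case analysis of $W(\cP)$ (some $\cP_{\{e\}}$ even forces a zero factor; all $\cP_{\{e\}}$ odd kills the subtracted means, leaving the full product whose expectation tests global evenness) is exactly right.
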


\subsection{Isomorphism classes of partitions}

In many applications of \cref{Sanri}, we can find pairs of partitions $\cP,\cQ \in \Con(p)$ such that $|\As(\cP,=,\ell)|=|\As(\cQ,=,\ell)|$, and indeed there are often a large number of partitions that have the same number of assignments associated to them in the summation of \cref{Sanri}.
This is often due to underlying symmetries, the group of which we now describe.
If $\cP$ is a partition of $\indexset$ and $\tau$ is an assignment in $\As(\cP,=,\ell)$, then we can view $\tau \colon \indexset \to [\ell]$ as a means of assigning numbers to places in a system of $p$ equations:
\begin{align*}
\tau_{0,0,0} + \tau_{0,0,1} & = \tau_{0,1,0} + \tau_{0,1,1} \\
& \,\,\, \vdots \\
\tau_{p-1,0,0} + \tau_{p-1,0,1} & = \tau_{p-1,1,0} + \tau_{p-1,1,1},
\end{align*}
and the group of symmetries that we use is a collection of permutations of $\indexset$ that essentially do three things: (i) they can change the order in which the equations are listed, (ii) they can transpose the sides of any one of the equations, and (iii) they can transpose the two terms on a given side of any one of the equations.  In the next few paragraphs, we describe this group of symmetries formally.

If $X$ is a set, then $S_X$ denotes the group of all permutations of $X$.
If $p \in \N$, then $S_p$ is shorthand for $S_{[p]}$.
The group $S_{\indexset}$ acts on elements of $\indexset$ in the usual way: if $\pi \in S_{\indexset}$ and $(e,s,v) \in \indexset$, then the action maps $(e,s,v)$ to $\pi(e,s,v)$.  By extension $S_{\indexset}$ also acts on subsets of $\indexset$: if $P \subseteq \indexset$, then $\pi(P)=\{\pi(e,s,v): (e,s,v) \in P\}$.  By further extension, $S_{\indexset}$ acts on sets of subsets of $\indexset$: if $\cP$ is a set of subsets of $\indexset$, then $\pi(\cP)=\{\pi(P): P \in \cP\}$.  This extended action restricts to an action of $S_{\indexset}$ on $\Part(P)$ because application of a permutation $\pi$ from $S_{\indexset}$ will take a partition to a partition.  We even extend the action of $S_{\indexset}$ to sets of sets of subsets of $\indexset$: if $\fP$ is a set of sets of subsets of $\indexset$, then $\pi(\fP)=\{\pi(\cP): \cP \in \fP\}$.

If $\pi\in S_{\indexset}$, then we use the notation $\pi^*$ to denote the permutation $\tau\mapsto\tau\circ\pi$ of $\As([p])$; the inverse of $\pi^*$ is $(\pi^{-1})^*$.
For $\tau \in \As([p])$ and $(e,s,v) \in \indexset$, we have $(\pi^*(\tau))_{e,s,v}=(\tau\circ\pi)_{e,s,v} = \tau_{\pi(e,s,v)}$.

Our group of symmetries will be a subgroup of $S_{\indexset}$ defined using wreath products in stages.
For the first stage, we write the elements of the wreath product $S_2 \Wr_{[2]} S_2$ as pairs of the form $\delta=((\digamma_0,\digamma_1),\sigma)$, where $\digamma_0,\digamma_1,\sigma \in S_2$ and $\delta$ acts on $(s,v) \in \lindexset$ by the rule
\[
\delta(s,v)=\big((\digamma_0,\digamma_1),\sigma\big)(s,v)=\left(\sigma(s),\digamma_{\sigma(s)}(v)\right).
\]
This wreath product $S_2 \Wr_{[2]} S_2$ is isomorphic to the dihedral group of order $8$.
Now we consider a further wreath product: we define $\cWp$ to be the wreath product $(S_2 \Wr_{[2]} S_2) \Wr_{[p]} S_p$ whose elements are of the form $\pi=((\delta_0,\ldots,\delta_{p-1}),\epsilon)$ with $\delta_0,\ldots,\delta_{p-1} \in S_2 \Wr_{[2]} S_2$ and $\epsilon \in S_p$.  For each $e \in [p]$, we write $\delta_e = ((\digamma_{e,0},\digamma_{e,1}),\sigma_e)$, so that
\[
\pi = \bigg(\Big(\big((\digamma_{0,0},\digamma_{0,1}),\sigma_0\big),\ldots,\big((\digamma_{p-1,0},\digamma_{p-1,1}),\sigma_{p-1}\big)\Big),\epsilon\bigg).
\]
Then $\pi$ acts on each $(e,s,v) \in \lindexset$ by the rule
\begin{equation}\label{Mikhail}
\pi(e,s,v)=\left(\epsilon(e),\sigma_{\epsilon(e)}(s),\digamma_{\epsilon(e),\sigma_{\epsilon(e)}(s)}(v)\right).
\end{equation}
We say that {\it $\pi$ uses permutation $\epsilon$ to permute equations, then uses permutation $\sigma_e$ to permute the sides of equation $e$ for each $e \in [p]$, and then uses permutation $\digamma_{e,s}$ to permute the places on side $s$ of equation $e$ for each $e \in [p]$ and $s \in [2]$}.
Since each $\pi \in \cWp$ permutes $\indexset$, we identify it with a permutation in $S_{\indexset}$; this makes $\cWp$ a subgroup of $S_{\indexset}$.
We have
\begin{equation}\label{Samuel}
|\cWp|=p! 8^p
\end{equation}
by \cite[eq.\ (5)]{Katz-Ramirez}.

Since $\cWp$ is a subgroup of $S_{\indexset}$, it acts on elements of $\indexset$, subsets of $\indexset$, sets of subsets of $\indexset$, and sets of sets of subsets of $\indexset$ as described above.
From this action, we obtain an equivalence relation on $\Part(p)$ for each $p \in \N$, which we now define.
\begin{definition}[Isomorphic partitions]
Let $p \in \N$ and $\cP,\cQ \in \Part(p)$.
We say that $\cP$ and $\cQ$ are {\it isomorphic} and write $\cP \cong \cQ$ to mean that there is some $\pi \in \cWp$ such that $\cQ=\pi(\cP)$.
\end{definition}
One of the most important properties of the action of $\cWp$ on partitions is that it preserves global evenness, local oddness, satisfiability, number of assignments, and contributoriness, as shown in \cite[Lem.\ 4.8--Cor.\ 4.10]{Katz-Ramirez}, which we summarize here.
\begin{lemma} \label{canapes}
Let $p \in \N$. If $\cP, \cQ \in \Part(p)$ with $\cP \cong \cQ$.  Then we have the following:
\begin{enumerate}[label= (\roman*)]
\item $\cP \in \gelo(p)$ if and only if $\cQ \in \gelo(p)$;
\item for every $\ell \in \N$, we have $|\As(\cP,=,\ell)|=|\As(\cQ,=,\ell)|$, so that $\cP \in \Sat(p)$ if and only if $\cQ \in \Sat(p)$; and
\item $\cP \in \Con(p)$ if and only if $\cQ \in \Con(p)$.
\end{enumerate}
\end{lemma}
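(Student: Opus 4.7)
The plan is to begin with the observation that (iii) follows directly from (i) and the first half of (ii), since $\Con(p)=\gelo(p)\cap\Sat(p)$, so we focus on establishing (i) and (ii). Throughout, fix $\pi\in\cWp$ with $\cQ=\pi(\cP)$.

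For (i), $\pi$ is a bijection of $\indexset$ that carries each class $P$ of $\cP$ to a class $\pi(P)$ of $\cQ$ of the same cardinality, so $\cP$ is even iff $\cQ$ is even. For the local condition, formula \eqref{Mikhail} shows that $\pi$ restricts to a bijection $\eindexset \to \eeindexset$; hence $\pi$ induces a cardinality-preserving bijection between the classes of $\cP_{\{e\}}$ and those of $\cQ_{\{\epsilon(e)\}}$. Since $\epsilon$ is a bijection of $[p]$, it follows that $\cP$ is locally odd iff $\cQ$ is.

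For (ii), the natural candidate bijection is $\Phi\colon\As(\cP)\to\As(\cQ)$ defined by $\Phi(\tau)=\tau\circ\pi^{-1}$. The relation $\cQ=\pi(\cP)$ makes $(e,s,v)\equiv(f,t,w)\pmod{\cQ}$ equivalent to $\pi^{-1}(e,s,v)\equiv\pi^{-1}(f,t,w)\pmod{\cP}$, so $\Phi$ is well defined, and $\tau\mapsto\tau\circ\pi$ is its inverse. The map $\Phi$ visibly preserves $\As(\cdot,\ell)$, so the only remaining point is the balance condition $\As(\cdot,=)$.

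The crux is the observation that for any $\pi\in\cWp$ and any $e\in[p]$, the map $\eindexset \to \eeindexset$ induced by $\pi$, read through the obvious identification of both sets with $\lindexset$, is an element of $S_2 \Wr_{[2]} S_2$; as such it preserves the ``side'' partition $\{\{(0,0),(0,1)\},\{(1,0),(1,1)\}\}$ of $\lindexset$, possibly swapping the two sides. Applying this to $\pi^{-1}$, the two sums $\Phi(\tau)_{e,0,0}+\Phi(\tau)_{e,0,1}$ and $\Phi(\tau)_{e,1,0}+\Phi(\tau)_{e,1,1}$ are precisely the two side sums of equation $\epsilon^{-1}(e)$ for $\tau$, possibly in reverse order. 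When $\tau\in\As(\cP,=)$ these side sums are equal, so $\Phi(\tau)\in\As(\cQ,=)$, and $\Phi$ restricts to a bijection $\As(\cP,=,\ell)\to\As(\cQ,=,\ell)$. This yields both the cardinality equality and the equivalence of satisfiability. The one conceptual step I expect to be the main obstacle is this side-preservation property of $\cWp$; everything else reduces to routine bookkeeping once it is in hand.
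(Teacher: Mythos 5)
Your proof is correct. Note that the paper itself does not prove this lemma in the text you see; it defers to \cite[Lem.\ 4.8--Cor.\ 4.10]{Katz-Ramirez}, so a line-by-line comparison with ``the paper's own proof'' is not possible here. That said, your argument is the natural one and is sound: you correctly reduce (iii) to (i) and (ii); for (i) you use that $\pi$ restricts to a cardinality-preserving bijection $\eindexset \to \eeindexset$ (which follows from \eqref{Mikhail}) and that $\epsilon$ permutes $[p]$; and for (ii) you use the bijection $\Phi = (\pi^{-1})^*\colon \tau \mapsto \tau\circ\pi^{-1}$, which carries $\As(\cP)$ onto $\As(\pi(\cP))$, preserves $\As(\cdot,\ell)$, and preserves $\As(\cdot,=)$ because each $\delta_e \in S_2 \Wr_{[2]} S_2$ sends the side partition $\{\{(0,0),(0,1)\},\{(1,0),(1,1)\}\}$ of $\lindexset$ to itself (at most swapping the two sides), so the two side sums of $\Phi(\tau)$ at equation $e$ are a rearrangement of the two side sums of $\tau$ at equation $\epsilon^{-1}(e)$. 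This is precisely the key structural fact that makes the wreath product $\cWp$ the right symmetry group, and you identified and used it correctly.
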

Therefore, we organize the contributory partitions into classes modulo this isomorphism relation, and notice that all partitions within an isomorphism class will have the same number of associated assignments.
We introduce notation that makes it easy to keep track of these.
\begin{definition}[$\Isom(p)$] Let $p \in \N$.  We use $\Isom(p)$ to denote the set of all equivalence classes within $\Con(p)$ under the isomorphism relation $\cong$.
\end{definition}
\begin{definition}[$\Sols(\cP,\ell)$]\label{Sonia} Let $p, \ell \in \N$.  If $\fP$ is any subset of $\Part(p)$ such that all the partitions in $\fP$ are isomorphic to each other, then we define $\Sols(\fP,\ell)$ to be the common value (by \cref{canapes}) of $|\As(\cP,=,\ell)|$ for $\cP \in \fP$.
\end{definition}
We most often apply \cref{Sonia} when $\fP \in \Isom(p)$, and this leads to our basic formula for the central moments of the sum of squares of autocorrelation in \cite[Prop.\ 4.13]{Katz-Ramirez}, which we restate here.
\begin{proposition}\label{Sanria}
If $ p,\ell \in \N$, then
\[ 
\mom{p} \ssac(f) = \sum_{ \fP \in \Isom(p)} |\fP| \Sols(\fP,\ell).
\]
\end{proposition}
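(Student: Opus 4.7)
The plan is to derive this directly from \cref{Sanri} by grouping the sum over $\Con(p)$ according to isomorphism classes. \cref{Sanri} already provides the identity $\mom{p}\ssac(f) = \sum_{\cP \in \Con(p)} |\As(\cP,=,\ell)|$, so the only work is to re-index this sum. To make the grouping legitimate, I first need that $\cong$ restricts to a genuine equivalence relation on $\Con(p)$, which is immediate from \cref{canapes}(iii): if $\cP \in \Con(p)$ and $\cQ \cong \cP$, then $\cQ \in \Con(p)$ as well. Hence $\Con(p)$ decomposes as the disjoint union $\bigsqcup_{\fP \in \Isom(p)} \fP$.

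Next I rewrite the sum from \cref{Sanri} as a double sum over this decomposition:
\[
\mom{p}\ssac(f) \;=\; \sum_{\fP \in \Isom(p)} \;\sum_{\cP \in \fP} |\As(\cP,=,\ell)|.
\]
By \cref{canapes}(ii), the inner summand $|\As(\cP,=,\ell)|$ depends only on the isomorphism class $\fP$ of $\cP$, and \cref{Sonia} names this common value $\Sols(\fP,\ell)$. Therefore the inner sum collapses to $|\fP|\,\Sols(\fP,\ell)$, yielding the desired formula.

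I expect no real obstacle here: the content of the proposition is a bookkeeping rearrangement, and all of the substantive invariance facts (preservation of contributoriness and of the assignment count under the $\cWp$-action) are supplied by \cref{canapes}. The only point requiring mild care is to observe that $\Isom(p)$ is well-defined as the set of $\cong$-classes inside $\Con(p)$, which again follows from \cref{canapes}(iii).
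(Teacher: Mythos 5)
Your proof is correct and is exactly the standard bookkeeping rearrangement: use \cref{canapes}(iii) to justify that $\Con(p)$ is a disjoint union of $\cong$-classes, then use \cref{canapes}(ii) together with \cref{Sonia} to collapse each inner sum to $|\fP|\,\Sols(\fP,\ell)$. The paper does not reprove this result (it cites the predecessor paper), but the argument there is the same re-indexing of the sum from \cref{Sanri}, so your approach matches.
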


\subsection{Structure of contributory partitions}

In addition to this basic formula, we shall use some constraints on the structure of contributory partitions that were proved in \cite{Katz-Ramirez}.
To state these constraints, we need another definition.
\begin{definition}[Twin class]
Let $p \in \N$.  A {\it twin class} is a subset $\cP$ of $\indexset$ such that there is some $(e,s) \in [p]\times [2]$ with $\{(e,s,0)$, $(e,s,1)\} \subseteq \cP$.
\end{definition}
The following constraint on the structure of contributory classes was proved as \cite[Lem.\ 5.2(i)]{Katz-Ramirez}.
\begin{lemma}\label{Epazote}
Let $p\in \N$ and $\cP \in \Con(p)$.
For each $e\in[p]$, the restriction $\cP_{\{e\}}$ is a partition of type $\ms{2,1,1}$ or $\ms{1,1,1,1}$.  In the former case, the class with cardinality $2$ is a twin class, i.e., there is some $s \in [2]$ such that $\cP_{\{e\}} = \{\{(e,s,0),(e,s,1)\}, \{(e,1-s,0)\},\{(e,1-s,1)\}\}$.  In the latter case, $\cP_{\{e\}}= \{\{(e,0,0)\},\{(e,0,1)\},\{(e,1,0)\},\{(e,1,1)\}\}$.
\end{lemma}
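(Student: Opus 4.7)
The plan is to enumerate all partitions of the four-element set $\{e\}\times\lindexset$, then eliminate possibilities using the two defining conditions of a contributory partition: local oddness and satisfiability.

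First I would list the five possible types of a partition of a 4-element set: $\ms{4}$, $\ms{3,1}$, $\ms{2,2}$, $\ms{2,1,1}$, and $\ms{1,1,1,1}$. Since $\cP \in \gelo(p)$, the restriction $\cP_{\{e\}}$ is not an even partition, which immediately eliminates the types $\ms{4}$ and $\ms{2,2}$ (whose parts all have even cardinality). This leaves only $\ms{3,1}$, $\ms{2,1,1}$, and $\ms{1,1,1,1}$.

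Next I would use satisfiability of $\cP$ to rule out the type $\ms{3,1}$. Since $\cP \in \Sat(p)$, there exists an assignment $\tau \in \As(\cP,=)$, which in particular satisfies $\tau_{e,0,0}+\tau_{e,0,1} = \tau_{e,1,0}+\tau_{e,1,1}$. If $\cP_{\{e\}}$ had type $\ms{3,1}$, then three of the four values $\tau_{e,s,v}$ would be equal to some value $a$, and the fourth would be a distinct value $b$. Substituting into the equation above, whichever of the four positions is the singleton, one obtains $b+a = a+a$ or $a+b = a+a$ or $a+a = b+a$ or $a+a = a+b$, each forcing $b=a$, contradicting that $\tau$ distinguishes classes.

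The same kind of analysis handles the type $\ms{2,1,1}$. There are $\binom{4}{2}=6$ possible choices for the class of size $2$; two of them are the twin classes $\{(e,0,0),(e,0,1)\}$ and $\{(e,1,0),(e,1,1)\}$, and the other four pair one element of side $0$ with one element of side $1$. In each of the four non-twin cases, writing out $\tau_{e,0,0}+\tau_{e,0,1}=\tau_{e,1,0}+\tau_{e,1,1}$ with the two equal values in opposite places cancels one value from each side, forcing the two singletons to be equal to each other and contradicting the assumption that they belong to distinct classes. Hence the only surviving configurations of type $\ms{2,1,1}$ are the two twin-class configurations stated in the lemma. Finally, for type $\ms{1,1,1,1}$ there is only one partition, which yields the stated form; satisfiability of this local configuration is automatic (one can pick any four distinct values satisfying the side-sum equation).

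The proof is essentially a small finite case analysis, so there is no real obstacle beyond bookkeeping: the work is to show in each non-allowed case that the summation constraint $\tau_{e,0,0}+\tau_{e,0,1}=\tau_{e,1,0}+\tau_{e,1,1}$ collapses into an identification of values that would violate the definition of $\As(\cP,=)$. Listing the six two-element subsets of $\{e\}\times\lindexset$ explicitly and checking each against the side-sum equation is the main verification step.
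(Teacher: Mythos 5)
Your proposal is correct. Since the paper itself does not prove this lemma but cites it to \cite[Lem.\ 5.2(i)]{Katz-Ramirez}, there is no in-paper proof to compare against; but your argument is the natural one and it is sound. The GELO condition correctly eliminates the even local types $\ms{4}$ and $\ms{2,2}$, and the satisfiability condition (via the side-sum equation $\tau_{e,0,0}+\tau_{e,0,1}=\tau_{e,1,0}+\tau_{e,1,1}$ together with the requirement that distinct classes carry distinct values) correctly rules out $\ms{3,1}$ and the four non-twin instances of $\ms{2,1,1}$ by forcing an identification of values across distinct classes. One small caveat for clarity: your closing remark about the surviving local configurations being ``automatically satisfiable'' is not needed for this lemma, which asserts only a necessary condition on $\cP_{\{e\}}$, and in fact local satisfiability of the restriction does not by itself imply global satisfiability of $\cP$, so that aside should not be read as part of the argument.
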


\subsection{Counting principle}

To help us count assignments, we also use an elementary counting result, which is proved in Lemma C.6 of \cite{Katz-Ramirez}.
\begin{lemma}\label{wowzers}
Let $\ell \in \N $. Then the number $(A,B,C,D) \in [\ell]^4$ with $A+B=C+D$ is $(2 \ell^3+\ell)/3$.
\end{lemma}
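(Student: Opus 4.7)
The plan is to count by conditioning on the common value $s = A+B = C+D$, which decouples the problem into a square. For each integer $s$ I would set $r(s) = |\{(A,B) \in [\ell]^2 : A+B = s\}|$, so that the 4-tuples $(A,B,C,D) \in [\ell]^4$ with $A+B = C+D$ are in bijection with pairs of ordered pairs sharing a common sum, giving the clean identity
\[
|\{(A,B,C,D) \in [\ell]^4 : A+B = C+D\}| = \sum_{s \in \Z} r(s)^2.
\]

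Next I would determine $r(s)$ explicitly. The admissible sums lie in $\{0,1,\ldots,2\ell-2\}$, and an elementary count (the number of integers $A \in [\ell]$ with $s-A \in [\ell]$ is the length of the interval $[\max(0,s-\ell+1),\min(\ell-1,s)]$) gives
\[
r(s) = \begin{cases} s+1 & \text{if } 0 \leq s \leq \ell-1, \\ 2\ell-1-s & \text{if } \ell-1 \leq s \leq 2\ell-2. \end{cases}
\]
Note the two cases agree at $s=\ell-1$ where $r(\ell-1) = \ell$.

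Splitting $\sum_s r(s)^2$ at $s=\ell-1$ and reindexing, the sum becomes $\sum_{k=1}^{\ell} k^2 + \sum_{k=1}^{\ell-1} k^2$. Applying the standard formula $\sum_{k=1}^{n} k^2 = n(n+1)(2n+1)/6$ and combining yields
\[
\frac{\ell(\ell+1)(2\ell+1)}{6} + \frac{(\ell-1)\ell(2\ell-1)}{6} = \frac{\ell}{6}\bigl[(2\ell^2+3\ell+1) + (2\ell^2-3\ell+1)\bigr] = \frac{\ell(4\ell^2+2)}{6} = \frac{2\ell^3+\ell}{3},
\]
which is the claimed value.

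The argument involves no real obstacle; it is a routine calculation once the sum-conditioning identity is in place, and the only mild care needed is in handling the boundary value $s=\ell-1$ (so as not to double-count it when splitting the sum into the two pieces).
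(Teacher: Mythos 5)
Your proof is correct. The paper does not include its own proof of this lemma; it simply cites Lemma C.6 of \cite{Katz-Ramirez}, so there is no in-text argument to compare against. Your approach — conditioning on the common sum $s$, recognizing that the count is $\sum_s r(s)^2$ where $r(s)$ is the number of ordered pairs in $[\ell]^2$ summing to $s$, computing the triangular function $r(s)$ explicitly, and summing the squares via $\sum_{k=1}^n k^2 = n(n+1)(2n+1)/6$ — is the standard and clean way to establish this identity. The boundary handling at $s=\ell-1$ is correct (that term contributes $\ell^2$ exactly once in the split $\sum_{k=1}^{\ell} k^2 + \sum_{k=1}^{\ell-1} k^2$), and the formula $(2\ell^3+\ell)/3$ checks out, including at the degenerate case $\ell=0$.
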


\section{Refinement of partitions}\label{Ulrich}
Now we introduce the notion of refinement of partitions, which is used to prove both that $\mom{p} \ssac(f)$ is a quasi-polynomial function of $\ell$ (see \cref{Quentin}) and that the standardized moments of $\ssac(f)$ tend to the same values as those of the standard normal distribution as the length of binary sequences tends to infinity (see \cref{Andrew}).
First, we introduce the following notion.
\begin{definition}[Coarser and finer partitions]
Let $\cP$ and $\cQ$ be partitions of the same set $S$.
If each class of $\cP$ is a subset of a class of $\cQ$, we say that $\cP$ is {\it finer than} (or {\it a refinement of}) $\cQ$ and write $\cP \preceq \cQ$, or equivalently, that $\cQ$ is {\it coarser than} (or \it{a coarsening of}) $\cP$ and write $\cQ \succeq \cP$.
If, in addition, $\cP\not=\cQ$, then we say that $\cP$ is {\it strictly finer than} (or {\it a strict refinement of}) $\cQ$ and write $\cP \prec \cQ$, or equivalently, that $\cQ$ is {\it strictly coarser than} (or \it{a strict coarsening of}) $\cP$ and write $\cQ \succ \cP$.
\end{definition}
We note that the finer than relation is preserved under the action of $\cWp$.
\begin{lemma}\label{Billy}
Let $\cQ,\cP \in \Part(p)$ and $\pi \in \cWp$. We have $\cP = \cQ$ if and only if $\pi(\cP) = \pi(\cQ)$.
Furthermore,  $\cP \preceq \cQ$ if and only if $\pi(\cP) \preceq \pi(\cQ)$.
Therefore, we have $\cP \prec \cQ$ if and only if $\pi(\cP) \prec \pi(\cQ)$.
\end{lemma}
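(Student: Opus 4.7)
The plan is to observe that every assertion in the lemma is a statement about set-theoretic containments and equalities among subsets of $\indexset$, all of which are preserved by the action of any bijection of $\indexset$. Since $\pi \in \cWp \subseteq S_{\indexset}$ is a bijection, and the extended actions on sets of subsets and on sets of sets of subsets are defined by applying $\pi$ pointwise, everything should reduce to routine bookkeeping.

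First I would handle the equality statement. The forward direction is immediate: if $\cP = \cQ$ then $\pi(\cP) = \pi(\cQ)$ because $\pi$ is a function on sets of subsets of $\indexset$. For the converse, I would use the fact that $\pi^{-1} \in \cWp$ (since $\cWp$ is a group), and the observation that the action of $\cWp$ on sets of subsets satisfies $\pi^{-1}(\pi(\cP)) = \cP$; this follows directly from $\pi^{-1}\circ\pi$ being the identity on $\indexset$. Applying $\pi^{-1}$ to both sides of $\pi(\cP) = \pi(\cQ)$ then yields $\cP = \cQ$.

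Next I would address the refinement statement. Unpacking the definition, $\cP \preceq \cQ$ means that for every $P \in \cP$ there exists $Q \in \cQ$ with $P \subseteq Q$. Since $\pi$ is a bijection of $\indexset$, the implication $P \subseteq Q \implies \pi(P) \subseteq \pi(Q)$ is immediate, and the classes of $\pi(\cP)$ and $\pi(\cQ)$ are precisely the $\pi$-images of the classes of $\cP$ and $\cQ$ respectively. This gives the forward direction. For the converse, I would apply the same argument to $\pi^{-1}$, using the already-established fact that $\pi^{-1}(\pi(\cP)) = \cP$ and likewise for $\cQ$.

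Finally the strict refinement statement follows by combining the previous two: $\cP \prec \cQ$ is defined as the conjunction of $\cP \preceq \cQ$ and $\cP \neq \cQ$, and both conditions are equivalent to their $\pi$-transported versions.

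There is no real obstacle here; the only thing to watch is to be explicit that $\cWp$ is a group so that $\pi^{-1}$ is available, and that the extended action on sets of subsets genuinely satisfies $\pi^{-1}(\pi(\cP)) = \cP$, which is a one-line check from the definition $\pi(\cP) = \{\pi(P) : P \in \cP\}$ together with $\pi^{-1}(\pi(P)) = P$ for each subset $P \subseteq \indexset$.
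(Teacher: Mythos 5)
Your proposal is correct and follows essentially the same approach as the paper: establish the forward implications using that $\pi$ is a bijection on $\indexset$, then deduce the converses by applying the forward result to $\pi^{-1}\in\cWp$. The strict-refinement claim then follows immediately, exactly as you note.
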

\begin{proof}
The first statement is true because $\pi$ acts as a permutation on the set $\Part(p)$. 

Now suppose that $\cP \preceq \cQ$, and let $S$ be a class in $\pi(\cP)$.
Then $S=\pi(P)$ for some $P \in \cP$.
Then there is some $Q \in \cQ$ such that $P \subseteq Q$.
So then $S=\pi(P) \subseteq \pi(Q) \in \pi(\cQ)$.
Thus, $\cP \preceq \cQ$ implies $\pi(\cP) \preceq \pi(\cQ)$.

If $\pi(\cP) \preceq \pi(\cQ)$, then since $\pi^{-1} \in \cWp$, the conclusion of the previous paragraph shows that $\pi^{-1}(\pi(\cP)) \preceq \pi^{-1}(\pi(\cQ))$, i.e., $\cP \preceq \cQ$.
\end{proof}
Given a positive integer $p$, a partition $\cP \in \Part(p)$, and an $\ell \in \N$, our formulae for central moments require us to enumerate $\tau \in \As(\cP,=,\ell)$, i.e., $\tau\in\As(=,\ell)$ such that $\tau_\beta=\tau_\gamma$ if and only if $\beta\equiv\gamma\pmod{\cP}$.
We have seen in \cite{Katz-Ramirez} that this can be difficult even with a $p$ as small as $3$.
It turns out to be much easier to enumerate $\tau\in\As(=,\ell)$ such that $\tau_\beta=\tau_\gamma$ if $\beta\equiv\gamma\pmod{\cP}$ (note that the condition here is not biconditional).
This prompts the following notations.
\begin{notation}\label{Gerald}
Let $p, \ell \in \N$ and $E \subseteq[p]$, and $\cP \in \Part(E)$.
Then we use the following notations:
\begin{itemize}
\item $\As(\succeq\!\cP)=\bigcup_{\cQ \succeq \cP} \As(\cQ) = \{\tau \in \As(E): \tau_\beta = \tau_\gamma \text{ if }  \beta \equiv \gamma \!\! \pmod\cP\}$,
\item $\As(\succeq\!\cP,\ell)=\bigcup_{\cQ \succeq \cP} \As(\cQ,\ell) = \As(\succeq\!\cP) \cap \As(E,\ell)$, 
\item $\As(\succeq\!\cP,=)=\bigcup_{\cQ \succeq \cP} \As(\cQ,=)= \As(\succeq\!\cP) \cap \As(E,=)$, and
\item $\As(\succeq\!\cP,=,\ell)=\bigcup_{\cQ \succeq \cP} \As(\cQ,=,\ell)= \As(\succeq\!\cP) \cap \As(E,=,\ell)$,
\end{itemize}
and we note that each of these four unions is a union of disjoint sets since each assignment induces a unique partition.
\end{notation}
We record without proof some immediate consequences of \cref{Gerald}.
\begin{lemma} \label{Gertrude}
Let $\cP\in\Part(p)$.
\begin{enumerate}[label= (\roman*)]
\item\label{Solomon} Then \[
|\As(\succeq\!\cP,=,\ell)| = \sum_{\cQ \succeq \cP} |\As(\cQ,=,\ell)|.
\]
\item Therefore,
\[
|\As(\succeq\!\cP,=,\ell)| \geq |\As(\cP,=,\ell)|.
\]
\end{enumerate}
\end{lemma}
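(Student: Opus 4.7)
The plan is to unpack the definition laid out in \cref{Gerald} and observe that part \cref{Solomon} reduces to checking that the displayed union is genuinely disjoint, while the second part is an immediate numerical consequence of the first.

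For part \cref{Solomon}, the key point is that for any two \emph{distinct} partitions $\cQ_1, \cQ_2$ of $\Eindexset$, the sets $\As(\cQ_1)$ and $\As(\cQ_2)$ are disjoint. The reason is that an assignment $\tau \in \As(\cQ)$ satisfies $\tau_\beta = \tau_\gamma$ \emph{if and only if} $\beta \equiv \gamma \pmod{\cQ}$, so $\cQ$ is recovered from $\tau$ as the equivalence relation on $\Eindexset$ whose classes are the fibers of $\tau$; hence $\tau$ can belong to $\As(\cQ)$ for at most one $\cQ$. Consequently, the union $\bigsqcup_{\cQ\succeq\cP}\As(\cQ)$ appearing in \cref{Gerald} is an honest disjoint union, and intersecting each term with $\As(E,=)\cap\As(E,\ell)$ preserves this disjointness, so
\[
\As(\succeq\cP,=,\ell) = \bigsqcup_{\cQ\succeq\cP} \As(\cQ,=,\ell).
\]
Taking cardinalities of both sides yields the identity of part \cref{Solomon}.

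Part (ii) follows immediately: since $\cP\succeq\cP$, the quantity $|\As(\cP,=,\ell)|$ appears as one of the summands on the right-hand side of \cref{Solomon}, and every other summand $|\As(\cQ,=,\ell)|$ is a nonnegative integer, so the total is at least $|\As(\cP,=,\ell)|$.

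There is no substantive obstacle here; the entire content of the lemma is bookkeeping that converts the set-theoretic decomposition already displayed in \cref{Gerald} into a numerical identity and a trivial lower bound. The one point that deserves a sentence of justification is the biconditional characterization of $\As(\cQ)$, which is what forces the union to be disjoint rather than merely a union.
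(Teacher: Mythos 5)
Your argument is correct and matches the reasoning the paper leaves implicit: the paper records this lemma "without proof" as an immediate consequence of \cref{Gerald}, whose $\bigsqcup$ notation already encodes the disjointness you verify via the biconditional in the definition of $\As(\cQ)$. Unpacking that disjoint union and taking cardinalities is exactly the intended argument, and part (ii) follows just as you say from reflexivity of $\succeq$ and nonnegativity of the remaining summands.
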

\begin{lemma}\label{Julian}
If $\cP, \cQ \in\Part(p)$ with $\cP \preceq \cQ$, then
\[
\As(\succeq\!\cP,=,\ell) \supseteq \As(\succeq\!\cQ,=,\ell)
\]
and
\[
|\As(\succeq\!\cP,=,\ell)| \geq |\As(\succeq\!\cQ,=,\ell)|.
\]
\end{lemma}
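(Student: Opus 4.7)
My plan is to prove the set containment directly from the alternative characterization of $\As(\succeq \cP, =, \ell)$ given in \cref{Gerald}, and then obtain the cardinality inequality as an immediate corollary.

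The key observation is that if $\cP \preceq \cQ$, then every class of $\cP$ is a subset of some class of $\cQ$, so that for any $\beta, \gamma \in \Eindexset$, the condition $\beta \equiv \gamma \pmod{\cP}$ implies $\beta \equiv \gamma \pmod{\cQ}$. In other words, the relation $\succeq$ is precisely set up so that finer partitions impose stronger equivalence conditions on indices.

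Given this, I would take an arbitrary $\tau \in \As(\succeq \cQ, =, \ell)$. By \cref{Gerald}, this means that $\tau \in \As(E, =, \ell)$ and that $\tau_\beta = \tau_\gamma$ whenever $\beta \equiv \gamma \pmod{\cQ}$. By the key observation in the previous paragraph, we have that $\tau_\beta = \tau_\gamma$ whenever $\beta \equiv \gamma \pmod{\cP}$, so $\tau \in \As(\succeq \cP, =, \ell)$ by the characterization in \cref{Gerald}. This proves the set inclusion, and since $\As(\succeq \cP, =, \ell) \subseteq \As(E, \ell) = [\ell]^{\Eindexset}$ is finite, the cardinality inequality follows immediately.

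As a sanity check, one could alternatively derive the cardinality inequality from \cref{Gertrude}\ref{Solomon}: the transitivity of $\succeq$ yields $\{\cR \in \Part(E) : \cR \succeq \cQ\} \subseteq \{\cR \in \Part(E) : \cR \succeq \cP\}$, and since each $|\As(\cR,=,\ell)|$ is a nonnegative integer, the sum over the larger index set dominates. There is no real obstacle here; the lemma is essentially an exercise in unwinding the definitions of \cref{Gerald}, with the only subtle point being the (standard) fact that refining a partition tightens the induced equivalence relation.
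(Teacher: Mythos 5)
Your proof is correct, and since the paper states this lemma without proof (treating it as an immediate consequence of the definitions in \cref{Gerald}), your argument is exactly the expected unwinding: refining the partition tightens the induced equivalence relation, so the constraint set for $\cQ$-equivalence is contained in that for $\cP$-equivalence, giving the set inclusion and hence the cardinality bound.
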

We also note that action of $\cWp$ on partitions preserves the sizes of the newly defined sets of assignments.
\begin{lemma}\label{Esmeralda}
Let $p,\ell \in \N$, $\cP \in \Part(p)$, and $\pi \in \cWp$.
Then
\[
|\As(\succeq\!\cP,=,\ell)|=|\As(\succeq\!\pi(\cP),=,\ell)|.
\]
\end{lemma}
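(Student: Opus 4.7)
The plan is to reduce the claim to the two facts already established: that the isomorphism action of $\cWp$ preserves the refinement relation (\cref{Billy}) and that it preserves the cardinality of the set of satisfying assignments with $\ell$-bounded values (\cref{canapes}\,(ii)). Since by definition $\As(\succeq\cP,=,\ell)$ is the disjoint union over coarsenings of $\cP$ of the sets $\As(\cQ,=,\ell)$, the size on the left is simply a sum indexed by coarsenings, and we want to transport this sum to a sum indexed by coarsenings of $\pi(\cP)$.

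First, I would invoke \cref{Gertrude}\,\ref{Solomon} to write
\[
|\As(\succeq\cP,=,\ell)| = \sum_{\cQ\succeq\cP} |\As(\cQ,=,\ell)|
\qquad\text{and}\qquad
|\As(\succeq\pi(\cP),=,\ell)| = \sum_{\cR\succeq\pi(\cP)} |\As(\cR,=,\ell)|.
\]
Next, I would check that the map $\cQ\mapsto\pi(\cQ)$ is a bijection from $\{\cQ\in\Part(p): \cQ\succeq\cP\}$ onto $\{\cR\in\Part(p):\cR\succeq\pi(\cP)\}$. This is immediate from \cref{Billy}: since $\pi$ acts as a permutation on $\Part(p)$, the map is injective with inverse given by the action of $\pi^{-1}$, and the refinement-preservation statement of \cref{Billy} guarantees that $\cQ\succeq\cP$ iff $\pi(\cQ)\succeq\pi(\cP)$, so the map lands in the target set and hits every element of it.

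Finally, for each such $\cQ$ we have $\pi(\cQ)\cong\cQ$ by definition of the isomorphism relation, so \cref{canapes}\,(ii) yields $|\As(\cQ,=,\ell)|=|\As(\pi(\cQ),=,\ell)|$. Substituting into the first sum and reindexing along the bijection $\cQ\mapsto\pi(\cQ)=\cR$, I obtain
\[
|\As(\succeq\cP,=,\ell)| = \sum_{\cQ\succeq\cP} |\As(\pi(\cQ),=,\ell)| = \sum_{\cR\succeq\pi(\cP)} |\As(\cR,=,\ell)| = |\As(\succeq\pi(\cP),=,\ell)|,
\]
as required. There is no real obstacle here: all the content is in \cref{Billy} and \cref{canapes}, and the proof consists solely of combining them through the disjoint union decomposition of \cref{Gertrude}. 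An alternative, more hands-on route would be to check directly that the map $\tau\mapsto\tau\circ\pi^{-1}$ (that is, $(\pi^{-1})^*$) carries $\As(\succeq\cP,=,\ell)$ bijectively onto $\As(\succeq\pi(\cP),=,\ell)$ — preservation of the refinement-indexed condition follows from \cref{Billy}, and preservation of the side-sum equations was already verified inside the proof of \cref{canapes} — but the summation argument above is more economical since it reuses the lemmas as black boxes.
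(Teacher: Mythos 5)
Your argument is correct and is essentially identical to the paper's own proof: both decompose $|\As(\succeq\cP,=,\ell)|$ via \cref{Gertrude}\,\ref{Solomon}, reindex the sum along the bijection $\cQ\mapsto\pi(\cQ)$ between coarsenings of $\cP$ and coarsenings of $\pi(\cP)$ (justified by \cref{Billy}), and then apply \cref{canapes} termwise. The alternative direct-bijection route you sketch at the end is also sound, but the main argument matches the paper step for step.
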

\begin{proof}
We have
\begin{align*}
|\As(\succeq\!\pi(\cP),=,\ell)| 
&= \sums{\cR \in \Part(p) \\ \cR \succeq \pi(\cP)} |\As(\cR,=,\ell)|&&\text{ by \cref{Gertrude}}  \\
&= \sums{\cQ \in \Part(p) \\ \pi(\cQ) \succeq \pi(\cP)} |\As(\pi(\cQ),=,\ell)| && \text{ because $\pi$ acts on $\Part(p)$} \\
&= \sums{\cQ \in \Part(p) \\ \cQ \succeq \cP} |\As(\pi(\cQ),=,\ell)| && \text{ by \cref{Billy}} \\
&= \sums{\cQ \in \Part(p) \\ \cQ \succeq \cP} |\As(\cQ,=,\ell)| &&\text{ by \cref{canapes}} \\
&= |\As(\succeq\!\cP,=,\ell)| && \text{ by \cref{Gertrude}}. \qedhere
\end{align*}
\end{proof}

\section{Central Moments are Quasi-Polynomial}\label{Quentin}

In this section, we prove \cref{Florence}, which states that $\mom{p} \ssac{f}$ is a quasi-polynomial function of $\ell$ in which all the polynomial coefficients are rational numbers.
We begin by showing that certain critical sets of assignments have cardinalities of this form.
\begin{lemma}\label{Rudolph}
Let $p \in \N$ and let $\cP \in \Part(p)$.
Then $|\As(\succeq\!\cP,=,\ell)|$ is a quasi-polynomial function of $\ell$ in which all coefficients are rational.
\end{lemma}
\begin{proof}
The set $\As(\succeq\!\cP,=,\ell)$ is the set of $\tau \in \N^{\indexset}$ meeting the following conditions:
\begin{itemize}
\item $\tau_\beta = \tau_\gamma$ if $\beta \equiv \gamma \pmod\cP$,
\item $\tau_{e,0,0}+\tau_{e,0,1}=\tau_{e,1,0}+\tau_{e,1,1}$ for every $e \in E$, and
\item $0 \leq \tau_{e,s,v} \leq \ell-1$ for every $(e,s,v) \in \indexset$.
\end{itemize}
If $p$ is positive, then $|\As(\succeq\!\cP,=,\ell)|$ is counting the number of integer points in the intersection of some hyperplanes defined by the equations in the first two points and the closed $4 p$-dimensional hypercube with side length $\ell-1$ described in the third point. 
Thus, Ehrhart theory \cite[Thm.\ 3.23]{Beck-Robins} tells us that $|\As(\succeq\!\cP,=,\ell)|$ is a quasi-polynomial function of $\ell$, and the coefficients in the quasi-polynomial must all be rational because the quasi-polynomial maps $\N$ into $\N$.  If $p=0$, then for every $\ell \in \N$ the set $\As(\succeq\!\cP,=,\ell)$ contains only the empty function from $\emptyset$ into $\N$, so its cardinality is the constant quasi-polynomial $1$.
\end{proof}
We can now use M\"obius inversion on the poset $\Part(p)$ (equipped with the reverse ordering relation to $\preceq$) to see that sets of the form $\As(\cP,=,\ell)$ have cardinalities that are quasi-polynomial in $\ell$.
\begin{lemma}\label{Daphne}
Let $p \in \N$ and let $\cP \in \Part(p)$.
Then $|\As(\cP,=,\ell)|$ is a quasi-polynomial function of $\ell$ in which all coefficients are rational.
\end{lemma}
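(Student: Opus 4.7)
The plan is to invert the identity
\[
|\As(\succeq\cP,=,\ell)| = \sum_{\cQ\succeq\cP} |\As(\cQ,=,\ell)|
\]
from \cref{Gertrude}, thereby expressing $|\As(\cP,=,\ell)|$ in terms of quantities that \cref{Rudolph} has already shown to be rational-coefficient quasi-polynomials. The cleanest route is M\"obius inversion on the partition lattice $\Part(p)$ (ordered by refinement): since this lattice has an integer-valued M\"obius function $\mu$, inversion yields
\[
|\As(\cP,=,\ell)| = \sum_{\cQ\succeq\cP}\mu(\cP,\cQ)\,|\As(\succeq\cQ,=,\ell)|,
\]
and no explicit computation of $\mu$ is needed.

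Alternatively, and without any appeal to lattice machinery, one can argue by reverse induction on the number of blocks $|\cP|$. In the base case $|\cP|=1$, the partition $\cP$ is the one-block partition, there is no strictly coarser partition, and \cref{Gertrude} immediately gives $\As(\cP,=,\ell)=\As(\succeq\cP,=,\ell)$, so \cref{Rudolph} applies directly. For the inductive step, rearrange \cref{Gertrude} to
\[
|\As(\cP,=,\ell)| = |\As(\succeq\cP,=,\ell)| - \sum_{\cQ\succ\cP}|\As(\cQ,=,\ell)|,
\]
noting that every $\cQ \succ \cP$ has $|\cQ|<|\cP|$; the inductive hypothesis handles each summand in the sum, while \cref{Rudolph} handles $|\As(\succeq\cP,=,\ell)|$.

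In either formulation, one concludes by observing that rational-coefficient quasi-polynomials in $\ell$ form a $\Q$-vector space, hence are closed under the finite $\Z$-linear combinations (or finite differences) that appear. The substantive ingredient---turning the hyperplane-section point count into a quasi-polynomial via Ehrhart theory---has already been carried out in \cref{Rudolph}, so no genuine obstacle remains at this stage; this lemma is essentially bookkeeping that transfers the structural result from the $\succeq$-closed assignment counts to the pointwise assignment counts.
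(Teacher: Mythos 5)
Your first route (M\"obius inversion on $\Part(p)$ under the refinement order) is exactly what the paper does, though you state it more cleanly: the paper reaches the same formula by encoding the single-argument counting functions into two-argument functions on $\Part(p)\times\Part(p)$ so as to apply the incidence-algebra version of inversion from Cameron, whereas you invoke the standard one-variable statement $g(\cP)=\sum_{\cQ\succeq\cP}\mu(\cP,\cQ)F(\cQ)$ directly, which is equivalent and avoids the bookkeeping. Your second route, strong induction on $|\cP|$ using $|\As(\cP,=,\ell)| = |\As(\succeq\cP,=,\ell)| - \sum_{\cQ\succ\cP}|\As(\cQ,=,\ell)|$, is a genuinely different and arguably more elementary argument that dispenses with M\"obius functions entirely; it is correct because any $\cQ\succ\cP$ has strictly fewer classes, and the base case $|\cP|=1$ (the coarsest partition $\{\indexset\}$) has no strict coarsening, so \cref{Gertrude} gives $\As(\cP,=,\ell)=\As(\succeq\cP,=,\ell)$ and \cref{Rudolph} applies. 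One terminological quibble: what you call ``reverse induction'' is ordinary strong induction on the number of classes, starting from the minimum value $|\cP|=1$; ``reverse'' would suggest starting from the maximum $|\cP|=4p$. In either formulation the closure of rational-coefficient quasi-polynomials under finite $\Z$-linear combinations finishes the proof, so the argument is sound.
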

\begin{proof}
First we recall from \cref{Gertrude}\ref{Solomon} that 
\[
|\As(\succeq\!\cQ,=,\ell)| = \sum_{\cR \succeq \cQ} |\As(\cR,=,\ell)|
\]
for every $\cQ \in \Part(p)$.
We define $\sqsubseteq$ to be $\succeq$ (and so $\sqsupseteq$ is $\preceq$).  So $(\Part(p),\sqsubseteq)$ is a poset with minimum element $\cM=\{\indexset\}$.  So then
\begin{equation}\label{lettuce}
|\As(\succeq\!\cQ,=,\ell)| = \sum_{\cR \sqsubseteq \cQ} |\As(\cR,=,\ell)|
\end{equation}
for every $\cQ \in \Part(p)$.
Now we set $g\colon \Part(P)\times\Part(P) \to \N$ with
\[
g(\cP,\cQ) = \begin{cases}
|\As(\cQ,=,\ell)| & \text{if $\cP=\cM$,} \\
0 & \text{if $\cP\not=\cM$.}
\end{cases}
\]
We also set $f\colon \Part(P)\times\Part(P) \to \N$ with
\[
f(\cP,\cQ) = \begin{cases}
|\As(\succeq\!\cQ,=,\ell)| & \text{if $\cP=\cM$,} \\
0 & \text{if $\cP\not=\cM$.}
\end{cases}
\]
So now, equation \eqref{lettuce} becomes
\[
f(\cM,\cQ) = \sum_{\cM \sqsubseteq \cR \sqsubseteq \cQ} g(\cM,\cR)
\]
for all $\cQ \in \Part(p)$.
Now consider the equation
\[
f(\cP,\cQ) = \sum_{\cP \sqsubseteq \cR \sqsubseteq \cQ} g(\cP,\cR),
\]
which we know is true when $\cP=\cM$ from the previous equation, and which is obviously true when $\cP\not=\cM$ because in that case the definitions of $f$ and $g$ make both sides obviously zero.
By using M\"obius inversion (see \cite[Prop.\ 12.7.3]{Cameron}), we have, for every $\cP, \cQ \in \Part(p)$, 
\[
g(\cP,\cQ) = \sum_{\cP \sqsubseteq \cR \sqsubseteq \cQ} f(\cP,\cR) \mu(\cR,\cQ),
\]
where $\mu \colon \Part(p) \times \Part(p) \to \Z$ is the M\"obius $\mu$ function for the incidence algebra for the poset $(\Part(p),\sqsubseteq)$.
Therefore, for every $\cQ \in \Part(p)$, we have
\[
g(\cM,\cQ) = \sum_{\cM \sqsubseteq \cR \sqsubseteq \cQ} f(\cM,\cR) \mu(\cR,\cQ),
\]
which means
\[
|\As(\cQ,=,\ell)| = \sum_{\cR \succeq \cQ} |\As(\succeq\!\cR,=,\ell)| \mu(\cR,\cQ),
\]
so that $|\As(\cQ,=,\ell)|$ is a $\Z$-linear combination of values of the form $|\As(\succeq\!\cR,=,\ell)|$ for $\cR \in \Part(p)$.  By \cref{Rudolph} these values are quasi-polynomial functions of $\ell$ with rational coefficients.
Therefore, for every $\cQ \in \Part(p)$, we know that $|\As(\cQ,=,\ell)|$ is a quasi-polynomial function of $\ell$ with rational coefficients.
\end{proof}
Using this last lemma, we can prove \cref{Florence}.
\begin{theorem}
Let $p \in \N$.  Then $\mom{p} \ssac(f)$ is a quasi-polynomial function of $\ell$ whose coefficients are all rational numbers.
Therefore, $\mom{p} \ADF(f)$ is $\ell^{-2 p}$ times a quasi-polynomial function of $\ell$ whose coefficients are all rational numbers.
\end{theorem}
\begin{proof}
The formula for $\mom{p} \ssac(f)$ given in \cref{Sanri} shows that it is a finite sum of quantities of the form $|\As(\cP,=,\ell)|$ for partitions $\cP \in \Part(p)$, all of which are known to be quasi-polynomial functions of $\ell$ with rational coefficients by \cref{Daphne}.
The statement about the moments of $\ADF$ then follows by applying formula \eqref{Natasha} from the Introduction.
\end{proof}

\section{Asymptotics}\label{Andrew}

This section is dedicated to the proof of \cref{Andy}.
In \cref{Samantha} we define special kinds of partitions, including a kind (called {\it principal partitions}) that produce the asymptotically dominant contribution to the moments, as is shown in \cref{Conrad}.
We conclude with the determination of the asymptotic behavior of the moments in \cref{Abraham}.

\subsection{Separable and principal partitions}\label{Samantha}

In this section, we define special kinds of partitions that help us prove our asymptotic estimates of the standardized moments of the demerit factor.
Some partitions, which we will call {\it separable partitions}, contribute to the moments in ways that are easier to quantify, and among these are the {\it principal partitions}, whose contributions dominate those of all other contributory partitions.
Before we can define these two kinds of partitions, we need some preliminary definitions to describe different kinds of partitions.
Throughout this section, whenever ${\mathcal X}$ is a set of sets, $\cup{\mathcal X}$ is shorthand for $\bigcup_{X \in {\mathcal X}} X$.
\begin{definition}[Equation-disjoint]
Let $p \in \N$, $\cP \in \Part(p)$, and $P,Q \in \cP$.
We say that $P$ and $Q$ are \textit{equation-disjoint} to mean that for every $e \in [p]$, at least one of $(\eindexset) \cap P$ or $(\eindexset) \cap Q$ is empty.
\end{definition}
\begin{definition}[Equation-identical]
Let $p \in \N$, $\cP \in \Part(p)$, and $P,Q \in \cP$.
We say that $P$ and $Q$ are \textit{equation-identical} to mean that for every $e \in [p]$, the intersection $(\eindexset) \cap P$ is empty if and only if $(\eindexset) \cap Q$ is empty.
\end{definition}
\begin{definition}[Equation-imbricate]
Let $p \in \N$, $\cP \in \Part(p)$, and $P,Q \in \cP$.
We say that $P$ and $Q$ are \textit{equation-imbricate} to mean that they are neither equation-disjoint or equation-identical.
\end{definition}
We record without proof the evident fact that equation-identicality is an equivalence relation, and introduce the quotient by this relation.
\begin{lemma}
Let $p \in\N$ and $\cP \in \Part(p)$.  Then the equation-identicality relation is an equivalence relation on $\cP$.
\end{lemma}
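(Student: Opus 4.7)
The plan is to reduce the verification to the familiar fact that equality of sets is an equivalence relation. To each class $P \in \cP$, I associate its \emph{equation-support}
\[
S(P) = \{e \in [p] : (\eindexset) \cap P \not= \emptyset\} \subseteq [p].
\]
Unwinding the definition of equation-identicality, I will observe that $P \sim Q$ holds exactly when $S(P) = S(Q)$: the biconditional ``$(\eindexset) \cap P$ is empty iff $(\eindexset) \cap Q$ is empty'' for each $e \in [p]$ says precisely that an index $e$ belongs to $S(P)$ iff it belongs to $S(Q)$, and a quantified biconditional over $e$ is the same as equality of the two subsets of $[p]$.

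From this reformulation all three properties of an equivalence relation are immediate. Reflexivity is $S(P) = S(P)$; symmetry follows because $S(P) = S(Q)$ implies $S(Q) = S(P)$; and transitivity follows because $S(P) = S(Q)$ and $S(Q) = S(R)$ imply $S(P) = S(R)$. Equivalently, one can verify each property directly from the biconditional form, using that the truth value of ``$(\eindexset) \cap X = \emptyset$'' depends only on $X$, so the conjunction over all $e \in [p]$ of these biconditionals inherits reflexivity, symmetry, and transitivity from the logical iff.

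I do not expect any serious obstacle; the lemma is a bookkeeping observation (recorded without proof in the paper precisely because it is immediate once one views $\sim$ as the pull-back of equality along the equation-support map $P \mapsto S(P)$). The only thing to be mildly careful about is ensuring that $S$ is genuinely a function on $\cP$, but this is automatic since $S(P)$ is defined purely in terms of which projections $(e,s,v) \mapsto e$ are hit by elements of $P$, and requires no further structural information about $P$.
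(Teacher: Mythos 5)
Your proof is correct. The paper records this lemma without proof, calling it evident, so there is no paper argument to compare against; your reformulation of $\sim$ as the pullback of set equality along the equation-support map $P \mapsto S(P)$ is exactly the natural way to make the omitted verification explicit, and it is clean and complete.
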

\begin{definition}[Equation-identicality class and quotient partition]
Let $p \in \N$ and $\cP \in \Part(p)$.
An equivalence class of the equation-identical relation on $\cP$ is called an {\it equation-identicality class} of $\cP$.
The partition of $\cP$ into equation-identicality classes is called the {\it quotient partition} (or just {\it quotient}) of $\cP$.
\end{definition}
Now we are ready to introduce the concepts of separable and principal partitions.
\begin{definition}[Separable partition]
If $p \in \N$ and $\cP \in \Part(p)$, then we say that $\cP$ is {\it separable} to mean that no two classes of $\cP$ are equation-imbricate.
We use $\Sep(p)$ to denote the set of all separable partitions in $\Part(p)$.
\end{definition}
\begin{definition}[Principal partition]
If $p\in\N$ and $\cP \in \Con(p)\cap\Sep(p)$, then we say that $\cP$ is {\it principal} to mean that $|P|=2$ for every $P \in \cP$.
\end{definition}	
Now we recognize that the action of $\cWp$ on partitions preserves all the concepts that we have introduced above.
\begin{lemma}\label{Justinian}
Let $p\in\N$, $\cP\in\Part(p)$, $P,Q\in \cP$, and $\pi\in\cWp$.
Then $P$ and $Q$ are equation-disjoint (resp., equation-identical, equation-imbricate) if and only if $\pi(P)$ and $\pi(Q)$ are equation-disjoint (resp., equation-identical, equation-imbricate).
\end{lemma}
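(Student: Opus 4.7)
The plan is to exploit the key structural fact about $\cWp = (S_2 \Wr_{[2]} S_2) \Wr_{[p]} S_p$: formula \eqref{Mikhail} shows that if $\pi = (((\delta_0,\ldots,\delta_{p-1}),\epsilon)$, then the first coordinate of $\pi(e,s,v)$ equals $\epsilon(e)$, independent of $s$ and $v$. From this, for every $e \in [p]$ we have $\pi(\eindexset) = \eeindexset$, so $\pi$ restricts to a bijection between the triples of any $R \subseteq \indexset$ that lie in $\eindexset$ and the triples of $\pi(R)$ that lie in $\eeindexset$.

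I would first record the consequence that for every $R \subseteq \indexset$ and every $e \in [p]$,
\[
\pi(R) \cap \eeindexset = \emptyset \iff R \cap \eindexset = \emptyset.
\]
Reindexing by $e' = \epsilon(e)$ and using that $\epsilon$ is a bijection of $[p]$, this becomes: for every $e' \in [p]$, $\pi(R) \cap (\{e'\}\times\lindexset) = \emptyset$ if and only if $R \cap (\{\epsilon^{-1}(e')\}\times\lindexset) = \emptyset$. In other words, writing $\mathrm{supp}(R) = \{e \in [p] : R \cap \eindexset \neq \emptyset\}$, we have $\mathrm{supp}(\pi(R)) = \epsilon(\mathrm{supp}(R))$.

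Applying this to $R = P$ and $R = Q$: equation-disjointness of $P$ and $Q$ is precisely the statement $\mathrm{supp}(P) \cap \mathrm{supp}(Q) = \emptyset$, while equation-identicality is the statement $\mathrm{supp}(P) = \mathrm{supp}(Q)$. Both conditions are preserved under applying the bijection $\epsilon$ to both supports, so $P, Q$ are equation-disjoint (resp.\ equation-identical) if and only if $\pi(P), \pi(Q)$ are. Equation-imbricateness is by definition the negation of the disjunction of these two, so it is preserved as well. The converse directions follow either directly by symmetry or by applying the same argument to $\pi^{-1} \in \cWp$.

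There is no real obstacle here: once one isolates the projection $(e,s,v) \mapsto \epsilon(e)$ on first coordinates that is built into the wreath product structure \eqref{Mikhail}, the three notions become properties of the supports $\mathrm{supp}(P), \mathrm{supp}(Q) \subseteq [p]$, and these transform covariantly under the bijection $\epsilon$.
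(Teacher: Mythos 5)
Your proposal is correct and takes essentially the same approach as the paper: both hinge on the observation that $\pi(\eindexset) = \eeindexset$ (the paper's equation \eqref{Nikita}), together with the bijectivity of $\epsilon$ on $[p]$, to show that the three notions are preserved. Your introduction of $\mathrm{supp}(R)$ is merely a slightly more compact packaging of the same argument the paper spells out element-by-element.
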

\begin{proof}
Let $\pi=\left((\delta_0,\ldots,\delta_{p-1}),\epsilon\right)$.
We note that the action \eqref{Mikhail} of $\pi$ implies that $\pi(\eindexset) \subseteq \eeindexset$, but $\pi$ is a permutation and $|\eindexset|=|\eeindexset|=4$, so that
\begin{equation}\label{Nikita}
\pi(\eindexset) = \eeindexset.
\end{equation}

Suppose that $P$ and $Q$ are equation-disjoint.
This is equivalent to saying that for every $e \in  [p] $ either $(\eindexset) \cap P$ is empty or $(\eindexset)\cap Q$ is empty.
Because $\pi$ is a bijection, this is true if and only if for every $e \in [p]$ either $\pi(\eindexset) \cap \pi(P)$ is empty or $\pi(\eindexset) \cap \pi(Q)$ is empty.
By \eqref{Nikita} this is true if and only if for every $e \in  [p] $ either $(\eeindexset) \cap \pi(P)$ is empty or $(\eeindexset) \cap \pi(Q)$ is empty.
Since $\epsilon$ is a bijection, this is true if and only if for every $e \in [p] $ either $(\eindexset) \cap \pi(P)$ is empty or $(\eindexset) \cap \pi(Q)$ is empty.
This is equivalent to saying that $\pi(P)$ and $\pi(Q)$ are equation-disjoint.
Hence $P$ and $Q$ are equation-disjoint if and only if $\pi(P)$ and $\pi(Q)$ are equation-disjoint.
	
Suppose $P$ and $Q$ are equation-identical.
This is equivalent to saying that for every $e \in [p]$ we have $(\eindexset) \cap P$ is empty if and only if $(\eindexset) \cap Q$ is empty.
Because $\pi$ is a bijection, this is true if and only if for every $e \in [p]$ we have $\pi(\eindexset) \cap \pi(P)$ is empty if and only if $\pi(\eindexset) \cap \pi(Q)$ is empty.
By \eqref{Nikita}, this is true if and only if for every $e \in [p]$ we have $(\eeindexset)\cap \pi(P)$ is empty if and only if $(\eeindexset)\cap \pi(Q)$ is empty.
Since $\epsilon$ is a bijection, this is true if and only if for every $e \in [p]$ we have $(\eindexset) \cap \pi(P)$ is empty if and only if $(\eindexset)\cap \pi(Q)$ is empty.
This is equivalent to saying that $\pi(P)$ and $\pi(Q)$ are equation-identical.
Hence $P$ and $Q$ are equation-identical if and only if $\pi(P)$ and $\pi(Q)$ are equation-identical.

Since equation-imbricate classes are precisely those classes that are neither equation-disjoint nor equation-identical, the claim that $P$ and $Q$ are equation imbricate if and only if $\pi(P)$ and $\pi(Q)$ are equation-imbricate follows from the previous two claims.
\end{proof}
\begin{corollary}\label{Lemon}
Let $p \in \N$, $\cP \in \Part(p)$, and $\pi\in \cWp$.
Then $\cP \in \Sep(p)$ if and only if $\pi(\cP)\in\Sep(p)$.
Furthermore, $\cP$ is principal if and only if $\pi(\cP)$ is principal.
\end{corollary}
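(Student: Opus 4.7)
The plan is to deduce both assertions directly from the preservation lemmas already in hand, since Corollary \ref{Lemon} is essentially a packaging of Lemma \ref{Justinian} (for separability) together with Lemma \ref{canapes} (for principality). No new combinatorial work is needed.

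For the separability claim, I would unfold the definition: $\cP \in \Sep(p)$ means that no two classes of $\cP$ are equation-imbricate. Since $\pi \in \cWp$ acts as a bijection on $\indexset$ and hence induces a bijection $\cP \to \pi(\cP)$ given by $P \mapsto \pi(P)$, every unordered pair of classes in $\pi(\cP)$ is of the form $\{\pi(P),\pi(Q)\}$ with $\{P,Q\}$ a pair of classes in $\cP$. By \cref{Justinian}, the pair $\pi(P),\pi(Q)$ is equation-imbricate if and only if $P,Q$ is. Consequently, $\pi(\cP)$ has no equation-imbricate pair of classes if and only if $\cP$ has none. Applying this equivalence with $\pi$ and then with $\pi^{-1}$ (which also lies in $\cWp$) gives the biconditional $\cP \in \Sep(p) \iff \pi(\cP)\in\Sep(p)$.

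For the principality claim, I would invoke three preservation facts in sequence. Principality is defined as the conjunction of (a) contributoriness, (b) separability, and (c) every class having cardinality $2$. \cref{canapes}\,(iii) preserves (a) under $\cong$, the first part of \cref{Lemon} just proved preserves (b), and (c) is preserved simply because $\pi$ is a bijection of $\indexset$, so $|\pi(P)|=|P|$ for every $P\in\cP$ and the classes of $\pi(\cP)$ are precisely the images $\pi(P)$. Combining these, $\cP$ satisfies (a)--(c) if and only if $\pi(\cP)$ does, which is the desired equivalence.

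There is no real obstacle here — the only thing to be careful about is the set-theoretic bookkeeping that the classes of $\pi(\cP)$ are exactly $\{\pi(P):P\in\cP\}$ (so that ``pairs of classes'' correspond under the action) and that $\pi^{-1}\in\cWp$ (so that each implication can be reversed). Both points are immediate from the definition of $\cWp$ as a subgroup of $S_\indexset$, so the proof should be only a few lines.
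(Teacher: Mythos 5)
Your proof is correct and takes essentially the same route as the paper: separability is preserved by \cref{Justinian} (applied pairwise to classes, using that $\pi$ induces a bijection between the classes of $\cP$ and those of $\pi(\cP)$), and principality then follows by combining that with \cref{canapes}(iii) for contributoriness and the bijectivity of $\pi$ for class cardinalities. The paper states this more tersely, but the ingredients and their assembly are the same.
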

\begin{proof}
\cref{Justinian} implies that $\pi$ preserves separability of partitions.
From this and the fact that $\pi$ preserves contributoriness (see \cref{canapes}) and sizes of classes in partitions ($\pi$ is a bijection), we also see that $\pi$ preserves principality.
\end{proof}
Another straightforward consequence of \cref{Justinian} is that the action of $\cWp$ is compatible with quotients of partitions.
\begin{corollary}
Let $p \in \N$, $\cP \in \Part(p)$, and $\mathfrak{P}$ be the quotient of $\cP$.
For any $\pi \in \cWp$, the quotient of $\pi(\cP)$ is $\pi(\mathfrak{P})$.
\end{corollary}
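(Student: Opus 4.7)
The plan is to exploit the fact that $\pi$ acts as a bijection $\cP \to \pi(\cP)$ via $P \mapsto \pi(P)$, and to show that this bijection preserves the equation-identicality relation $\sim$. Once that is established, the $\sim$-equivalence classes of $\pi(\cP)$ must be precisely the $\pi$-images of the $\sim$-equivalence classes of $\cP$, which is exactly the statement that the quotient of $\pi(\cP)$ equals $\pi(\mathfrak{P})$.

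First I would invoke \cref{Justinian}, which immediately supplies the required equivariance: for any $P, Q \in \cP$ we have $P \sim Q$ if and only if $\pi(P) \sim \pi(Q)$. This says that the bijection $P \mapsto \pi(P)$ intertwines the relation $\sim$ on $\cP$ with the relation $\sim$ on $\pi(\cP)$. Next I would apply the elementary fact that a relation-preserving bijection between two sets carries equivalence classes to equivalence classes: for each equation-identicality class $\mathcal{E} \in \mathfrak{P}$ of $\cP$, the set $\pi(\mathcal{E}) = \{\pi(P) : P \in \mathcal{E}\}$ is an equation-identicality class of $\pi(\cP)$, and conversely every $\sim$-class of $\pi(\cP)$ arises this way. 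Combined with the fact that the extended action of $\cWp$ on sets of sets of subsets of $\indexset$ is defined precisely by $\pi(\mathfrak{P}) = \{\pi(\mathcal{E}) : \mathcal{E} \in \mathfrak{P}\}$, this yields $\pi(\mathfrak{P}) = \pi(\cP)/\sim$, as desired.

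The main obstacle, such as it is, is purely notational bookkeeping: one must be careful to distinguish the four different levels at which $\cWp$ acts (on elements of $\indexset$, on subsets of $\indexset$, on sets of subsets of $\indexset$, and on sets of sets of subsets of $\indexset$), so that the symbol $\pi(\mathfrak{P})$ in the statement is interpreted via the outermost action. Once that setup is in place and \cref{Justinian} is invoked to handle the preservation of $\sim$, the proof reduces to the routine observation that a bijection preserving an equivalence relation induces a bijection on equivalence classes, so no substantive difficulty remains.
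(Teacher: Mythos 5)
Your proof is correct and takes essentially the same approach as the paper: the paper states this result as an immediate corollary of \cref{Justinian} and gives no further argument, and your reasoning (invoking \cref{Justinian} to see that $P \mapsto \pi(P)$ is a $\sim$-preserving bijection $\cP \to \pi(\cP)$, hence carries $\sim$-classes to $\sim$-classes) is exactly the intended justification.
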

The property of separability imposes a great deal of structure upon a contributory partition.
\begin{lemma} \label{Rosemary} 
Let $p \in \N$, let $\cP\in\Con(p)\cap\Sep(p)$, and let $\mathfrak{P}$ be the quotient of $\cP$.
Then each class of $\mathfrak{P}$ is of size $3$ or $4$.
\begin{enumerate}[label= (\roman*)]
\item If $\cC \in \mathfrak{P}$ with $|\cC| = 3$, then there is some nonempty $E \subseteq [p]$ of even cardinality such that one set in $\cC$ (which we call $C_0$) is of size $2 |E|$ and the other two sets in $\cC$ (which we call $C_{1,0}$ and $C_{1,1}$) are each of size $|E|$.
Furthermore, for each $e \in E$, there is some $s_e \in [2]$ such that $(e,s_e,0), (e,s_e,1) \in C_0$ and one of $(e,1-s_e,0)$ and $(e,1-s_e,1)$ lies in $C_{1,0}$ while the other lies in $C_{1,1}$.
\item\label{parsley} If $\cC \in \mathfrak{P}$ with $|\cC| = 4$, then there is some nonempty $E \subseteq [p]$ of even cardinality such that all four sets in $\cC$ are of size $|E|$.
Furthermore, there is some labeling $C_{0,0}$, $C_{0,1}$, $C_{1,0}$, and $C_{1,1}$ of the four classes in $\cC$ such that for every $e \in E$, there is some $s_e \in [2]$ such that one of $(e,s_e,0)$ or $(e,s_e,1)$ lies in $C_{0,0}$ while the other lies in $C_{0,1}$, and one of $(e,1-s_e,0)$ or $(e,1-s_e,1)$ is in $C_{1,0}$ while the other lies in $C_{1,1}$.
\end{enumerate}
\end{lemma}
\begin{proof}
Suppose that $\cC \in \fP$.
Let $E$ be the set of all $e \in [p]$ such that $\eindexset$ has a nonempty intersection with one (and therefore all) of the sets in $\cC$.
Of course, $E$ must be nonempty since the sets in $\cC$ are nonempty (since they are classes in the partiton $\cP$).
Since $\cP \in \Sep(p)$, this means that $E\times\lindexset$ is disjoint from all the sets in $\cP\smallsetminus\cC$.
Thus, for each $e \in E$, we have $\cP_{\{e\}}=\cC_{\{e\}}$, which is a partition of the set $\eindexset$.
Since $\cP \in \Con(p)$, then by \cref{Epazote}, for each $e \in [p]$ we know that $\cP_{\{e\}}=\cC_{\{e\}}$ is either type $\ms{2,1,1}$ or $\ms{1,1,1,1}$.
This implies that $|\cC|$ is either $3$ or $4$.
		
Let us consider the case where $|\cC| = 3$.
Suppose for contradiction that $\cC$ contained two distinct classes, say $P$ and $Q$, such that $|P_{\{f\}}|=|Q_{\{g\}}|=2$ for some $f,g \in [p]$.
Let $R$ be the third class in $\cC$.
Note that $f\not=g$ since $\cP_{\{f\}}$ and $\cP_{\{g\}}$ must be type $\ms{2,1,1}$ by \cref{Epazote}.
Furthermore, \cref{Epazote} shows that there are some $s,t \in [2]$ such that $(f,s,0),(f,s,1) \in P$ and $(g,t,0),(g,t,1) \in Q$.
Since $P$, $Q$, and $R$ are equation-identical, this means that $|P_{\{g\}}|=|Q_{\{f\}}|=|R_{\{f\}}|=|R_{\{g\}}|=1$.
Since $\cP \in \Con(p) \subseteq \Sat(p)$, there is some $\tau \in \As(\cP,=)$.
Let $A$ (resp., $B$, $C$) denote the element of $\N$ such that $\tau_\gamma=A$ (resp., $B$, $C$) for $\gamma\in P$ (resp., $Q$, $R$).
Then looking at $\gamma$'s of the form $(f,*,*)$ (resp., $(g,*,*)$), we must have $2 A=B+C$ and $2 B=A+C$, which would make $A=B$, which contradicts $\cP \in \Sat(p)$.
So we conclude that if $|\cC|=3$, then $\cC$ has at most one class $C$ such that $|C_{\{e\}}|=2$ for some $e \in E$.
Recall that for each $e \in E$, the restriction $\cC_{\{e\}}$ is a partition of the set $\eindexset$ into at most $|\cC|=3$ subsets.
So for each $e \in E$, two elements of $\eindexset$ must lie in one of the classes of $\cC$.
Since there can be at most one such class in $\cC$, this shows that there is some class $C_0 \in \cC$ such that for every $e \in E$, the class $C_0$ must contain two elements of $\eindexset$ (making $|C_0|=2|E|$), and furthermore, by \cref{Epazote}, we know that for each $e \in E$, there is some $s_e \in [2]$ such that $(e,s_e,0)$ and $(e,s_e,1)$ both lie in $C_0$.
We name the other two sets in $\cC$ as $C_{1,0}$ and $C_{1,1}$, and then for each $e \in E$, one of $(e,1-s_e,0)$ and $(e,1-s_e,1)$ lies in $C_{1,0}$ and the other lies in $C_{1,1}$.
Then $|C_{1,0}|=|C_{1,1}|=|E|$, so that $|E|$ is even because $C_{1,0} \in \cP \in \Con(p) \subseteq \gelo(p)$.

Let us consider the case where $|\cC| = 4$.
Since $E$ is nonempty, there is some $f \in E$, and we let $C_{0,0}$ (resp., $C_{0,1}$, $C_{1,0}$, $C_{1,1}$) be the classes in $\cC$ such that $(f,0,0) \in C_{0,0}$ (resp., $(f,0,1) \in C_{0,1}$, $(f,1,0) \in C_{1,0}$, $(f,1,1) \in C_{1,1}$).
Suppose for contradiction that there is some $g \in E $, and $t \in [2] $ such that $|(\{g\} \times \{t\} \times [2]) \cap (C_{0,0} \cup C_{0,1})|=1$.
Since $\cP \in \Con(p) \subseteq \Sat(p)$, there is some $\tau \in \As(\cP,=)$.
Let $A$ (resp., $B$, $C$, $D$) denote the element of $\N$ such that $\tau_\gamma=A$ (resp., $B$, $C$, $D$) for $\gamma\in C_{0,0}$ (resp., $C_{0,1}$, $C_{1,0}$, $C_{1,1}$).
Then looking at $\gamma$'s of the form $(f,*,*)$ (resp., $(g,*,*)$), we must have $A+B=C+D$ (resp., either $A+C=B+D$ or $A+D=B+C$), which would make either $A=D$ or $A=C$, which contradicts $\cP \in \Sat(p)$.
Thus, for every $e \in E$ and $s \in [2]$, we must have $|(\{e\}\times\{s\}\times[2]) \cap (C_{0,0}\cup C_{0,1})|=0$ or $2$.
So for every $e \in E$, there is some $s_e \in [2]$ such that $|(\{e\}\times\{s_e\}\times[2]) \cap (C_{0,0}\cup C_{0,1})|=2$, so that one of $(e,s_e,0)$ or $(e,s_e,1)$ lies $C_{0,0}$ and the other lies in $C_{0,1}$, and one of $(e,1-s_e,0)$ or $(e,1-s_e,1)$ lies in $C_{1,0}$ and the other lies in $C_{1,1}$.
Then $|C_{0,0}|=|C_{0,1}|=|C_{1,0}|=|C_{1,1}|=|E|$, so that $|E|$ is even because $C_{0,0} \in \cP \in \Con(p) \subseteq \gelo(p)$.
\end{proof}
There are many conditions equivalent to principality; we list some important ones here.
\begin{lemma}\label{water}
Let $\cP \in \Con(p)\cap\Sep(p)$ and let $\fP$ be the quotient of $\cP$.  Then the following are equivalent:
\begin{enumerate}[label={(\roman*)}]
\item\label{rutabaga} $\cP$ is principal.
\item\label{banana} Every class in $\cP$ has cardinality $2$ and is not a twin class.
\item\label{broccoli} Every class in $\cP$ has cardinality $2$ and $|\cC|=4$ for every $\cC \in \fP$.
\item\label{zucchini} $|\cP|=2 p$ and $|\fP|=p/2$.
\item\label{tomato} $|\cP|-|\fP|=3p/2$.
\item\label{potato} $|\cP|-|\fP| \geq 3p/2$.
\item\label{turnip} $|\cP|=2 p$.
\end{enumerate}
\end{lemma}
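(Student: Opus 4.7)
The plan is to introduce an auxiliary condition $(*)$: every equation-identicality class $\cC \in \fP$ satisfies $|\cC|=4$ and $|E_\cC|=2$, where $E_\cC\subseteq[p]$ is the nonempty, even-sized set associated to $\cC$ by \cref{Rosemary}. I would show each of (i)--(vii) is equivalent to $(*)$, yielding the full set of mutual equivalences.

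First I would record the necessary bookkeeping. Separability forces classes of $\cP$ living in distinct elements of $\fP$ to be equation-disjoint, so the sets $E_\cC$ are pairwise disjoint subsets of $[p]$; since every $e \in [p]$ appears in some class of $\cP$, the family $\{E_\cC\}_{\cC\in\fP}$ partitions $[p]$. \cref{Rosemary} then gives $|\cC|\in\{3,4\}$ and $|E_\cC|\geq 2$ (even and nonempty), and shows that the total size of the classes belonging to $\cC$ equals $4|E_\cC|$ in both cases ($2|E_\cC|+|E_\cC|+|E_\cC|$ when $|\cC|=3$ and $4|E_\cC|$ when $|\cC|=4$). This yields
\[
\sum_{\cC\in\fP}|E_\cC|=p\qquad\text{and}\qquad |\cP|=\sum_{\cC\in\fP}|\cC|.
\]

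For the structural equivalences (i)--(iii), I would exploit two features of \cref{Rosemary}: a size-3 $\cC$ always contains $C_0$, a twin class of size $2|E_\cC|\geq 4$, while a size-4 $\cC$ splits the two places on each side of each equation into distinct classes, producing no twin class. Thus (i) forces $|\cC|=4$ for every $\cC\in\fP$ (else a class has size $\geq 4$), and then $|E_\cC|=2$ for every $\cC$, giving $(*)$; the same description shows that in the size-4 regime with $|E_\cC|=2$ no class is a twin class, so (i), (ii), (iii) are all equivalent to $(*)$.

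For the arithmetic equivalences (iv)--(vii), I would use two sharp inequalities derived from $|\cC|\leq 4$ and $|E_\cC|\geq 2$: namely $|\cC|\leq 2|E_\cC|$ and $|\cC|-1\leq\tfrac{3}{2}|E_\cC|$, each with equality iff $|\cC|=4$ and $|E_\cC|=2$. Summing over $\cC\in\fP$ yields $|\cP|\leq 2p$ and $|\cP|-|\fP|\leq \tfrac{3p}{2}$, both with equality iff $(*)$; since these are upper bounds, (vi) coincides with (v), and both---together with (vii)---are equivalent to $(*)$. For (iv), $|\fP|=p/2$ combined with $\sum|E_\cC|=p$ and $|E_\cC|\geq 2$ forces every $|E_\cC|=2$, after which $|\cP|=2p=4|\fP|$ together with $|\cC|\leq 4$ forces every $|\cC|=4$, so (iv)$\Leftrightarrow(*)$. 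The main obstacle is simply the careful tracking of equality cases in these two inequalities; the substantive structural work has already been carried out in \cref{Rosemary}.
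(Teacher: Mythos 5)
Your proof is correct. Your hub\nobreakdash-and\nobreakdash-spoke organization (showing each of (i)--(vii) equivalent to a single pivot $(*)$) differs from the paper's, which instead establishes the cycle (i)$\Rightarrow$(ii)$\Rightarrow\cdots\Rightarrow$(vii)$\Rightarrow$(i). The substantive inputs are the same in both: \cref{Rosemary} gives $|\cC|\in\{3,4\}$ and $|E_\cC|\ge 2$, separability makes the $E_\cC$ pairwise disjoint with union $[p]$, and the rest is counting. Where you genuinely gain something is in the arithmetic part: the paper's step (vi)$\Rightarrow$(vii) threads a chain of global inequalities ($|\cP|/4\le|\fP|$, $|\cP|\le 2p$, and so on) that must all collapse to equalities, which is correct but somewhat opaque; you replace this with two clean per\nobreakdash-class inequalities, $|\cC|\le 2|E_\cC|$ and $|\cC|-1\le\tfrac32|E_\cC|$, each with the equality case $|\cC|=4,\ |E_\cC|=2$ identified explicitly, and simply sum over $\cC\in\fP$. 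That makes it immediate that (v), (vi), and (vii) are all equality cases of sharp upper bounds and hence all equivalent to $(*)$, and it also makes the ``no principal partitions when $p$ is odd'' corollary ($|\fP|=\sum|E_\cC|/2=p/2$ must be an integer) even more visible. The paper's cycle avoids naming an auxiliary condition, which is slightly leaner on notation, but your version is arguably more transparent and would be a perfectly acceptable alternative.
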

\begin{proof}
Suppose that $\cP$ is principal.
Then every class in $\cP$ has cardinality $2$.
If $\cP$ had a twin class, then by \cref{Rosemary} there must be some $\cC \in \fP$ with $|\cC|=3$ and some $C_0 \in \cC \subseteq \cP$ with $|C_0|$ equal to twice a positive even integer, i.e., $|C_0| \geq 4$, which contradicts what we know about the sizes of classes of $\cP$.
So \ref{rutabaga} implies \ref {banana}.

Suppose that every class in $\cP$ has cardinality $2$ and is not a twin class.
Since $\cP \in \Con(p)\cap\Sep(p)$ and has no twin classes, then by \cref{Rosemary} we know that $|\cC|=4$ for each $\cC \in \fP$.
Hence, \ref{banana} implies \ref{broccoli}.

Suppose that every class in $\cP$ has cardinality $2$ and every class in the quotient $\fP$ of $\cP$ has cardinality $4$.
Since $\cP$ is a partition of $\indexset$, which has $4 p$ elements, into classes of size $2$, this makes $|\cP|=4 p/2=2 p$, and then $\fP$ is a partition of $\cP$ into classes of size $4$, so $|\fP|=2p/4=p/2$.
Hence, \ref{broccoli} implies \ref{zucchini}. 

Now, \ref{zucchini} directly implies \ref{tomato}.
Then \ref{tomato} clearly implies \ref{potato}.

Suppose $|\cP|-|\fP| \geq 3p/2$. Since $ \fP $ is a partition of $ \cP $, then we know that by \cref{Rosemary}
\[ \frac{|\cP| }{4} \leq \frac{|\cP| }{\max_{C \in \fP} |C|} \leq |\fP| . \]
Since $ \cP $ is a partition of $ \indexset $ with all classes of positive even size, then 
\[ \frac{3p}{2} \geq \frac{3}{4} \cdot  \frac{|\indexset|}{\min_{P \in \cP} |P|} \geq \frac{3|\cP|}{4} = |\cP| -  \frac{|\cP| }{4} \geq |\cP| - |\fP| \geq \frac{3p}{2}, \]
so all the inequalities are in fact equalities, and then $3|\cP|/4=3p/2$, so we conclude that $ |\cP| = 2p $. Thus, \ref{potato} implies \ref{turnip}.

Suppose $|\cP| =2p$.
Since each class must be of positive even size and $\cP$ is a partition of a set with cardinality equal to $4 p$, then we can conclude that every class in $\cP$ has size two.
Hence, \ref{turnip} implies \ref{rutabaga}.
\end{proof}
It now becomes clear that there are no principal partitions involved when computing odd moments.
\begin{corollary}\label{milk}
If $\cP \in \Con(p)\cap\Sep(p)$ and $\cP$ is not principal then
\[ |\cP|-|\fP| \leq  \ceil{ 3p/2} -1. \]
Moreover, if $p$ is odd then $\Part(p)$ has no principal partitions.
\end{corollary}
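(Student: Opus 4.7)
The plan is to deduce both statements immediately from the list of equivalent conditions for principality in \cref{water}, using only the fact that $|\cP|$ and $|\fP|$ must be non-negative integers.

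For the first assertion, suppose $\cP \in \Con(p)\cap\Sep(p)$ is not principal. Since conditions \ref{rutabaga} and \ref{potato} of \cref{water} are equivalent, the negation of principality is equivalent to the strict inequality $|\cP|-|\fP| < 3p/2$. Because $|\cP|-|\fP|$ is an integer, this strict inequality is equivalent to $|\cP|-|\fP| \leq \lfloor 3p/2 \rfloor$ when $p$ is even and to $|\cP|-|\fP| \leq (3p-1)/2$ when $p$ is odd; in both cases this bound coincides with $\lceil 3p/2 \rceil - 1$ (when $p$ is even, $3p/2$ is an integer so $\lceil 3p/2 \rceil - 1 = 3p/2 - 1 = \lfloor 3p/2 \rfloor$; when $p$ is odd, $\lceil 3p/2 \rceil = (3p+1)/2$, so $\lceil 3p/2 \rceil - 1 = (3p-1)/2$). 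This yields the desired bound.

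For the second assertion, suppose for contradiction that $p$ is odd and that some $\cP \in \Con(p)\cap\Sep(p)$ is principal. By the equivalence of \ref{rutabaga} and \ref{zucchini} in \cref{water}, this would force $|\fP| = p/2$. But $|\fP|$ is the cardinality of a set and hence a non-negative integer, while $p/2$ is not an integer since $p$ is odd. This contradiction shows that no such principal partition exists.

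The proof is essentially a transcription of the content already packaged into \cref{water}, so there is no substantive obstacle; the only mild subtlety is checking in both parity cases that the integrality of $|\cP|-|\fP|$ converts the strict inequality $|\cP|-|\fP| < 3p/2$ into the ceiling-based bound $|\cP|-|\fP| \leq \lceil 3p/2 \rceil - 1$ stated in the corollary.
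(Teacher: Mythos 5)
Your proof is correct and takes essentially the same approach as the paper: the first claim is read off from the equivalence of principality with $|\cP|-|\fP|\geq 3p/2$ (condition \ref{potato} of \cref{water}) together with integrality of $|\cP|-|\fP|$, and the second from the fact that principality forces $|\fP|=p/2$ (condition \ref{zucchini}), which cannot hold for odd $p$. The extra parity case analysis you give to convert the strict inequality into the ceiling bound is just spelling out a step the paper leaves implicit.
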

\begin{proof}
The first statement follows immediately from \cref{water}\ref{potato}.
If $ \cP \in \Part(p)$ is principal, then $\cP \in \Con(p)\cap\Sep(p)$ by definition.
So if $ \fP $ is the quotient of $ \cP $, then by \cref{water}\ref{zucchini} $ |\fP| = p/2 $.
Since $|\fP|$ must be a whole number, if $ p $ is odd then $\Part(p)$ has no principal partitions.
\end{proof}
When computing an even moment, \cref{Horatio} says that the principal partitions lie in a single isomorphism class, which we enumerate in \cref{Henrietta}.
\begin{lemma}\label{Horatio}
Let $p$ be a positive even integer and for each $(e,s,v) \in [p/2] \times \lindexset$, let $Q_{e,s,v} = \{ (e,s,v) , ( e + p/2, s,v) \}$ and let $\cQ=\{Q_{e,s,v} : (e,s,v) \in [p/2] \times \lindexset\}$.
Then a partition in $\Part(p)$ is principal if and only if it is isomorphic to $\cQ$.
\end{lemma}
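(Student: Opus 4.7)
The plan is to verify both directions using the three-layer structure of $\cWp$: the outer permutation $\epsilon$ of equations, the middle side-swaps $\sigma_e$, and the innermost place-swaps $\digamma_{e,s}$.  First I would show $\cQ$ itself is principal, whence the ``if'' direction follows from \cref{Lemon}.  Each $Q_{e,s,v}$ has size $2$, so by \cref{water}\ref{turnip} it suffices that $\cQ\in\Con(p)\cap\Sep(p)$.  Separability holds because $Q_{e,s,v}$ and $Q_{e',s',v'}$ share an equation only when $e=e'$ in $[p/2]$, and in that case they are equation-identical.  For GELO, each class has size $2$ and each restriction $\cQ_{\{d\}}$ for $d\in[p]$ is a partition of type $\ms{1,1,1,1}$.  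For satisfiability, any generic choice of values $\tau_{e,s,v}\in\N$ for $(e,s,v)\in[p/2]\times\lindexset$ satisfying $\tau_{e,0,0}+\tau_{e,0,1}=\tau_{e,1,0}+\tau_{e,1,1}$, extended by $\tau_{e+p/2,s,v}=\tau_{e,s,v}$, lies in $\As(\cQ,=)$.

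For the ``only if'' direction, let $\cP$ be principal with quotient $\fP$.  By \cref{water}\ref{zucchini} and the $|\cC|=4$ case of \cref{Rosemary}, the blocks of $\fP$ correspond to a perfect matching of $[p]$ into $p/2$ pairs $\{a_i,b_i\}$, $i\in[p/2]$.  I construct the desired $\pi\in\cWp$ by specifying its components in three stages.  Stage~$1$: pick $\epsilon\in S_p$ with $\{\epsilon(a_i),\epsilon(b_i)\}=\{i,i+p/2\}$ for each $i\in[p/2]$, so the equation-pairing of the image matches that of $\cQ$.  Stage~$2$: for each $i$, the $|\cC|=4$ case of \cref{Rosemary} supplies indices $s_i,s_i'\in[2]$ describing which side of equation $i$ (resp.\ $i+p/2$) hosts the $C_{0,0}$--$C_{0,1}$ split within the quotient class for that pair; pick $\sigma_i,\sigma_{i+p/2}\in S_2$ to be the swaps that send these to $0$.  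Stage~$3$: after Stages~$1$ and~$2$, on each side $s\in[2]$ of each pair the two surviving classes have the form $\{(i,s,a),(i+p/2,s,a')\}$ and $\{(i,s,1-a),(i+p/2,s,1-a')\}$ for some $a,a'\in[2]$; choose $\digamma_{i+p/2,s}\in S_2$ to be the swap exactly when $a\neq a'$.  After these choices each class is mapped to a $Q$-class, so $\pi(\cP)=\cQ$.

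The main obstacle is confirming that the three stages do not interfere, since \eqref{Mikhail} composes $\epsilon$, $\sigma$, and $\digamma$ in a single step rather than sequentially.  The concern dissolves because (i) the components $\sigma_e$ and $\digamma_{e,s}$ are indexed by image coordinates and chosen independently per equation; (ii) separability of $\cP$, preserved under $\cWp$ by \cref{Lemon}, places each equation in exactly one equation-identicality class, decoupling the stage-$2$ and stage-$3$ choices across different pairs $\{i,i+p/2\}$; and (iii) within each quotient class, the case analysis of \cref{Rosemary}\textup{(ii)} is unambiguous because principality forces every class to have size $2$.
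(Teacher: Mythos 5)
Your proposal is correct and takes essentially the same approach as the paper: the ``if'' direction is obtained by verifying $\cQ$ is principal and invoking \cref{Lemon}, and the ``only if'' direction constructs an explicit witness $\pi \in \cWp$ from the structure supplied by \cref{water} and \cref{Rosemary}\textup{(ii)}. The only cosmetic difference is that you build $\pi$ with $\pi(\cP)=\cQ$ and present its three wreath-product components as sequential ``stages,'' whereas the paper defines $\pi$ with $\pi(\cQ)=\cP$ in one step; your observation that \eqref{Mikhail} factors cleanly through $\epsilon$, then $\sigma$, then $\digamma$ is exactly why the stage-wise description is legitimate.
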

\begin{proof}
We know that $\cQ\in\gelo(p)$ because every class in $\cQ$ is of size $2$ and $\card{(Q_{e,0,0})_{\{e\}}} = \card{\{(e,0,0)\}}=1$ and $\card{(Q_{e,0,0})_{e+p/2}}=\card{\{(e+p/2,0,0)\}}=1$ for every $e \in [p/2]$.
If we let $\tau\in\As([p])$ where $\tau_{e,0,0}=\tau_{e+p/2,0,0}=4 e$, $\tau_{e,0,1}=t_{e+p/2,0,1}=4 e+3$, $\tau_{e,1,0}=\tau_{e+p/2,1,0}=4 e+1$, and $\tau_{e,1,1}=\tau_{e+p/2,1,1}=4 e+2$ for every $e \in [p/2]$, then one sees that $\tau\in\As(\cQ,=)$, so that $\cQ\in\Sat(p)$, and so $\cQ\in\Con(p)$.
For $(e,s,v),(f,t,w) \in [p/2]\times\lindexset$, the sets $Q_{e,s,v}$ and $Q_{f,t,w}$ are equation-identical if $e=f$ and equation-disjoint otherwise.
So $\cQ$ is separable and therefore principal since every class in $\cQ$ is of size $2$.
Therefore, every partition isomorphic to $\cQ$ is also principal by \cref{Lemon}.

Now assume that $\cP$ is a principal partition in $\Part(p)$, and we shall design a permutation $\pi \in \cWp$ such that $\pi(\cQ)=\cP$.
Let $\fP$ be the quotient of $\cP$.
Then by \cref{water}\ref{broccoli} and \ref{zucchini}, $|\cP|=2 p$ with every class in $\cP$ of cardinality $2$ and $|\fP|=p/2$ with every class in $\fP$ of cardinality $4$.
Let us label the classes in $\fP$ as $\cC_0, \dots, \cC_{p/2 -1}$.
By \cref{Rosemary}\ref{parsley}, for each $i \in [p/2]$ there is some $E_i \subseteq [p]$ of even cardinality such that we can label the four elements of $\cC_i$ as $C_{i,0,0}$, $C_{i,0,1}$, $C_{i,1,0}$, and $C_{i,1,1}$, and there will be some $s_e,v_{e,0},v_{e,1} \in [2]$ for each $e \in E_i$ such that
\begin{align*}
(e,s_e,v_{e,0}) & \in C_{i,0,0} \\
(e,s_e,1-v_{e,0}) & \in C_{i,0,1} \\
(e,1-s_e,v_{e,1}) & \in C_{i,1,0} \\
(e,1-s_e,1-v_{e,1}) & \in C_{i,1,1}
\end{align*}
for each $e \in E_i$.
Since each of the four classes of the form $C_{i,*,*}$ is of cardinality $2$, this means $|E_i|=2$.
For each $i \in [p/2]$, write the two distinct elements of $E_i$ as $e_{i,0}$ and $e_{i,1}$.
We now choose some $\epsilon \in S_{[p]}$ such that for each $(i,j) \in [p/2]\times[2]$ we have $\epsilon(i+j(p/2))=e_{i,j}$.
For each $e \in [p]$, let $\sigma_e \in S_{[2]}$ be chosen with $\sigma_e(0)=s_e$.
For each $e \in [p]$ and $s \in [2]$, let $\digamma_{e,s} \in S_{[2]}$ be chosen with $\digamma_{e,s}(0)=v_{e,k}$ where $k=0$ if $s=s_e$ and $k=1$ if $s=1-s_e$.
Let $\pi \in \cWp$ with
\[\pi=\left(\left(\left((\digamma_{0,0},\digamma_{0,1}),\sigma_0\right),\ldots,\left((\digamma_{p-1,0},\digamma_{p-1,1}),\sigma_{p-1}\right)\right),\epsilon\right).
\]
Then one can check that for each $(i,s,v) \in [p/2] \times\lindexset$, we obtain $\pi(Q_{i,s,v})=C_{i,s,v}$, so that $\pi(\cQ)=\cP$, and thus $\cP$ is isomorphic to $\cQ$.
\end{proof}
\begin{lemma}\label{Henrietta}
Let $p$ be a positive even integer.  The set of all principal partitions in $\Part(p)$ is a $\cWp$-orbit in $\Part(p)$ of cardinality $(p-1)!! \, 8^{p/2}$.
\end{lemma}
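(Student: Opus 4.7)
The plan is to invoke the orbit--stabilizer theorem. By \cref{Horatio}, the set of principal partitions in $\Part(p)$ is exactly the $\cWp$-orbit of the reference partition $\cQ$, so its size equals $|\cWp|/|\Stab_{\cWp}(\cQ)|$. The order $|\cWp|$ is immediate from the iterated wreath-product description: the inner group $S_2 \Wr_{[2]} S_2$ has order $8$ (the dihedral group of order $8$), so $|\cWp|=8^p\,p!$.

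The central step will be to characterize $\Stab_{\cWp}(\cQ)$. I plan to write a generic element as $\pi=((\delta_0,\ldots,\delta_{p-1}),\epsilon)$ and unpack the action \eqref{Mikhail}. The image of $Q_{e,s,v}=\{(e,s,v),(e+p/2,s,v)\}$ under $\pi$ has first-coordinate set $\{\epsilon(e),\epsilon(e+p/2)\}$; for this image to be a class of $\cQ$, that set must be of the form $\{e',e'+p/2\}$, forcing $\epsilon$ to permute the $p/2$ pairs $\{i,i+p/2\}$ among themselves. Then, matching the last two coordinates of $\pi(e,s,v)$ and $\pi(e+p/2,s,v)$ forces $\delta_i=\delta_{i+p/2}$ for every $i\in[p/2]$. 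The main obstacle is verifying that these necessary conditions are also sufficient, which amounts to a direct symbol-by-symbol check of \eqref{Mikhail}; beyond that, no combinatorial difficulty remains.

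To count the stabilizer, the permutations $\epsilon\in S_p$ preserving the specified pairing form the hyperoctahedral group of order $2^{p/2}(p/2)!$, and for each such $\epsilon$ the constraint $\delta_i=\delta_{i+p/2}$ reduces the choice of tuple $(\delta_0,\ldots,\delta_{p-1})$ to free selection of $\delta_0,\ldots,\delta_{p/2-1}$, giving $8^{p/2}$ options. Hence $|\Stab_{\cWp}(\cQ)|=2^{p/2}(p/2)!\cdot 8^{p/2}$. Applying orbit--stabilizer then yields
\[
\frac{|\cWp|}{|\Stab_{\cWp}(\cQ)|}=\frac{8^p\,p!}{2^{p/2}(p/2)!\cdot 8^{p/2}}=8^{p/2}\cdot\frac{p!}{2^{p/2}(p/2)!}=8^{p/2}(p-1)!!,
\]
where the last equality uses the standard identity $p!=(p-1)!!\cdot 2^{p/2}(p/2)!$ valid for even $p$, giving the claimed cardinality.
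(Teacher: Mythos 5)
Your proposal is correct and follows essentially the same path as the paper: invoke \cref{Horatio} to identify the principal partitions as the orbit of $\cQ$, characterize $\Stab_{\cWp}(\cQ)$ by the conditions that $\epsilon$ must permute the mate-pairs $\{i,i+p/2\}$ and that $\delta_i=\delta_{i+p/2}$, and then apply orbit--stabilizer. The only cosmetic difference is that you count the admissible $\epsilon$'s as the hyperoctahedral group of order $2^{p/2}(p/2)!$ whereas the paper counts them directly as $p!!$; these are the same number for even $p$, so the computations agree.
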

\begin{proof}
Let $\cQ$ be the partition described in \cref{Horatio}.
The set of principal partitions is the $\cWp$-orbit of $\cQ$ by \cref{Horatio}.

We claim that the necessary and sufficient conditions on the permutation $\pi=((\delta_0,\ldots,\delta_{p-1}),\epsilon) \in \cWp$ for $\pi$ to be in $\Stab(\cQ)$ (the stabilizer of $\cQ$) are as follows:
\begin{itemize}
\item For every $e \in [p/2]$, we have $\epsilon(e+p/2) \equiv \epsilon(e)+p/2 \pmod{p}$.
\item For every $e \in [p/2]$, we have $\delta_e=\delta_{e+p/2}$.
\end{itemize}
If the first condition were violated, then there would be some $e \in [p/2]$ such that our permutation $\pi$ maps the class $\{(e,s,v),(e+p/2,s,v)\}$ of $\cQ$ to $\{(\epsilon(e),\delta_{\epsilon(e)}(s,v)), (\epsilon(e+p/2),\delta_{\epsilon(e+p/2)}(s,v)\}$ where $\epsilon(e+p/2)\not\equiv \epsilon(e)+p/2 \pmod{p}$, which means that our class of $\cQ$ is not being mapped to a class of $\cQ$.
If the first condition held but not the second, then there would be some $e \in [p/2]$ such that $\delta_e\not=\delta_{e+p/2}$, and so there will be some $s,v \in \lindexset$ such that $\pi$ maps the class $\{(\epsilon^{-1}(e),s,v),(\epsilon^{-1}(e)+p/2 \bmod{p},s,v)\}$ of $\cQ$ to $\{(e,\delta_e(s,v)), (e+p/2,\delta_{e+p/2}(s,v))\}$ where $\delta_e(s,v)\not=\delta_{e+p/2}(s,v)$, so that our class of $\cQ$ is not being mapped to a class of $\cQ$.
So our conditions are necessary.
Furthermore, if $\pi$ meets our two conditions, then for every $(e,s,v) \in [p/2]\times\lindexset$, the permutation $\pi$ maps the class $\{(e,s,v),(e+p/2,s,v)\}$ of $\cQ$ to $\{(\epsilon(e),\delta_{\epsilon(e)}(s,v)), (\epsilon(e+p/2),\delta_{\epsilon(e+p/2)}(s,v)\}=\{(e',s',v'),(e'+p/2 \bmod p,s',v')\}$ for some $e',s',v' \in [p/2]\times\lindexset$, i.e., $\pi$ maps every class of $\cQ$ to another class of $\cQ$.
Thus, our conditions are sufficient.

Now let us count how many $\pi=((\delta_0,\ldots,\delta_{p-1}),\epsilon) \in \cWp$ satisfy our necessary and sufficient conditions for $\pi$ to be in $\Stab(\cQ)$.
To count how many $\epsilon \in S_{[p]}$ satisfy the first condition, we introduce the terminology that two distinct elements $x, y \in [p]$ are {\it mates} if $x \equiv y+p/2 \pmod{p}$: so $[p]$ is a set partitioned into $p/2$ pairs of mates.
To meet the first condition, $\epsilon$ must map mates to mates.
We have $p$ choices for $\epsilon(0)$ (and $\epsilon(p/2)$ is forced to be the mate of $\epsilon(0)$).
This leaves $p-2$ choices for $\epsilon(1)$ (and $\epsilon(p/2+1)$ is forced to be the mate of $\epsilon(1)$).
Continuing in this manner, we see that there are $p!!$ possibilities for $\epsilon$.
To satisfy the second condition, we have  $8$ choices for each of $\delta_0,\ldots,\delta_{p/2-1}$ (since $S_2 \Wr_{[2]} S_2$ is isomorphic to the dihedral group of order $8$), and then $\delta_{p/2},\ldots,\delta_{p-1}$ must be the same list as the former list.
So $\card{\Stab(\cQ)}=p!! \, 8^{p/2}$.
Therefore, using \eqref{Samuel}, we see that the orbit of $\cQ$ (which is the set of all principal partitions in $\Part(p)$) has cardinality
\begin{align*}
\frac{\card{\cWp}}{\card{\Stab(\cQ)}}
& = \frac{p! \, 8^p}{p!! \, 8^{p/2}} \\
& = (p-1)!! \, 8^{p/2}. \qedhere
\end{align*}
\end{proof}

\subsection{Counting solutions}\label{Conrad}

In \cref{Ulrich}, we commented that if $p$ is a positive integer, $\cP\in\Part(p)$, and $\ell\in\N$, it is easier to enumerate $\As(\succeq\!\cP,=,\ell)$ than $\As(\cP,=,\ell)$.
Nonetheless, it still requires effort to compute $|\As(\succeq\!\cP,=,\ell)|$, so we introduce some notations and tools so that we can transform this problem into one of finding solutions to equations involving homogeneous linear polynomials.
We first give a definition that will supply the coefficients that appear in these polynomials.
\begin{definition}[Coefficient for $P$ in $e$, $\Co_{P,e}$]
For $P \subseteq \N \times \lindexset$ and $e \in \N$, we define the {\it coefficient for $P$ in $e$} to be
\[
\Co_{P,e} = |(\{e\} \times \{0\} \times [2]) \cap P| -|(\{e\} \times \{1\} \times [2]) \cap P|.
\]
\end{definition}
Many of the sets and partitions that we use have the following property, which is important in proving our results.
\begin{definition}[Effectual set and sets of sets]\label{Ellie}
We say that a set $P \subseteq \N \times \lindexset$ is {\it effectual} to mean that for every $e \in \N$ such that $(\eindexset) \cap P\not=\emptyset$ we have $\Co_{P,e}\not=0$.
We say that set $\cP$ of subsets of $\N\times\lindexset$ is {\it effectual} to mean that every $P \in \cP$ is effectual.
\end{definition}
We now introduce the variables that will occur in our linear polynomials.
\begin{notation}[$X_{\cP}$]
Let $p \in \N$ and $\cP \in \Part(p)$.
Then $X_{\cP}=\{X_P: P \in \cP\}$ is a set of $|\cP|$ distinct indeterminates indexed by the classes of $\cP$.
\end{notation}
Now we introduce our linear polynomials.
\begin{definition}[Form for $\cP$ in $e$, $\Fo_{\cP,e}$]\label{Doreen}
Let $p \in \N$ and $\cP\in \Part(p)$. Then for each $e \in [p]$, we define the {\it form for $\cP$ in $e$} to be
\begin{equation*}	  
\Fo_{\cP,e}(X_{\cP}) = \sum_{P \in \cP} \Co_{P,e} X_P.
\end{equation*}
\end{definition}
Notice that $\Fo_{\cP,e}$ is an element of the $|\cP|$-dimensional $\Q$-vector space of linear forms in the polynomial ring $\Q[X_{\cP}]$.
In our proofs, is important to distinguish partitions $\cP$ that have an associated form $\Fo_{\cP,e}$ in \eqref{Doreen} that is equal to $0$.
\begin{definition}[Degenerate partition]
Let $p \in \N$.  We say that a partition $\cP \in \Part(p)$ is {\it degenerate} to mean that there is some $e \in [p]$ such that $\Fo_{\cP,e}=0$.  Otherwise, $\cP$ is said to be {\it nondegenerate}.
\end{definition}
Effectuality, introduced in \cref{Ellie}, can be viewed as a stronger form of nondegeneracy.
\begin{lemma}\label{David}
Let $p \in \N$ and let $\cP \in \Part(p)$ be effectual.  Then $\cP$ is nondegenerate.
\end{lemma}
\begin{proof}
Let $e \in [p]$.  Since $\cP$ is a partition of $\indexset$, there is some $P \in \cP$ with $(e,0,0) \in P$.  Then $\Co_{P,e}\not=0$ because $\cP$ (and hence $P$) is effectual.  Thus $\Fo_{\cP,e}\not=0$.
\end{proof}
We now relate nondegeneracy and effectuality to our earlier concepts.
\begin{lemma}\label{Eva}
Let $p \in \N$ and $\cP \in \gelo(p)$.  Then $\cP$ is nondegenerate.
\end{lemma}
\begin{proof}
Let $e \in [p]$.  Since $\cP_{\{e\}}$ is not an even partition, there is some $P \in \cP$ such that $|(\eindexset) \cap P|$ is odd.  Therefore 
\begin{align*}
\Co_{P,e} & = |(\{e\} \times \{0\} \times [2]) \cap P| -|(\{e\} \times \{1\} \times [2]) \cap P| \\
& \equiv |(\{e\} \times \{0\} \times [2]) \cap P| + |(\{e\} \times \{1\} \times [2]) \cap P| \pmod{2} \\
& = |(\eindexset) \cap P| \\
& \equiv 1 \pmod{2},
\end{align*}
so that $\Co_{P,e} \not=0$, and thus $\Fo_{\cP,e}\not=0$.
\end{proof}
\begin{lemma}\label{Stanley}
If $p \in \N$ and $\cP \in \Con(p)$, then $\cP$ is effectual (and hence nondegenerate).
\end{lemma}
\begin{proof}
Let $P \in \cP$.
Let $e \in [p]$ with $(\eindexset)\cap P\not=\emptyset$.
By \cref{Epazote}, $P_{\{e\}}$ is of the form $\{(e,s,0),(e,s,1)\}$ for some $s \in [2]$, or else $P_{\{e\}}=\{(e,s,v)\}$ for some $s,v \in [2]$.  In the former case $|\Co_{P,e}|=2$, and in the latter case, $|\Co_{P,e}|=1$, so $\Co_{P,e}\not=0$.
This shows that every $P \in \cP$ is effectual, so that $\cP$ is effectual, and thus $\cP$ is also nondegenerate by \cref{David}.
\end{proof}
For a partition $\cP \in \Part(p)$, the linear forms of \cref{Doreen} provide an alternative way to think of the equations that an assignment $\tau$ must satisfy to be in $\As(\succeq\!\cP,=)$, where for each $P \in \cP$ the variable $X_P$ should be substituted with the common value of $\tau_{e,s,v}$ for all $(e,s,v) \in P$.  Rather than keep track of $\tau_{e,s,v}$ for every $(e,s,v) \in \indexset$ while enforcing the rule that $\tau_{e,s,v}=\tau_{e',s',v'}$ when $(e,s,v)$ and $(e',s',v')$ lie in the same class of $\cP$, we make an object that records one value for each class in $\cP$.
\begin{notation}[$\Fo_{\cP,e}(U)$]
Let $p,\ell \in \N$ and let $\cP\in\Part(p)$.
For any element $U \in [\ell]^{\cP}$, we write $\Fo_{\cP,e}(U)$ to mean the value one obtains by substituting, for each $P \in \cP$, the number $U(P)$ for the indeterminate $X_P$ in the polynomial $\Fo_{\cP,e}(X_{\cP})$.
\end{notation}
Now we can write $|\As(\succeq\!\cP,=,\ell)|$ in terms of our new concepts.
\begin{lemma}\label{Nancy}
Suppose that $p, \ell \in\N$ and $\cP \in \Part(p)$.  Then the set $\As(\succeq\!\cP,=,\ell)$ has the same cardinality as $\{U \in [\ell]^{\cP}: \Fo_{\cP,0}(U)=\cdots=\Fo_{\cP,p-1}(U)=0\}$.
\end{lemma}
\begin{proof}			
Let $\phi \colon \As(\succeq\!\cP,\ell) \to [\ell]^{\cP}$ be the map where for $\tau \in \As(\succeq\!\cP,\ell)$, we set $\phi(\tau) \in [\ell]^{\cP}$ such that for any $P \in \cP$ we have $(\phi(\tau))(P)$ equal to the common value of $\tau_{e,s,v}$ for every $(e,s,v) \in P$.
And let $\psi\colon [\ell]^{\cP} \to \As(\succeq\!\cP,\ell)$ be the map where for $U \in [\ell]^{\cP}$, we set $\psi(U) \in \As(\succeq\!\cP,\ell)$ such that for any $(e,s,v) \in \indexset$ we have $(\psi(U))_{e,s,v}=U(P)$ for whichever $P \in \cP$ contains $(e,s,v)$.  (This $P$ is exists and is unique because $\cP$ is a partition of $\indexset$.)
Then it is straightforward to check that $\phi$ and $\psi$ are inverses of each other, hence bijections.

For $\tau \in \As(\succeq\!\cP,\ell)$, we note that $\tau \in \As(\succeq\!\cP,=,\ell)$ if and only if $\Fo_{\cP,e}(\phi(\tau))=0$ for every $e \in [p]$.
Thus, if we restrict the domain of $\phi$ to $\As(\succeq\!\cP,=,\ell)$ and the codomain to $\{U \in [\ell]^{\cP}: \Fo_{\cP,0}(U)=\cdots=\Fo_{\cP,p-1}(U)=0\}$, then we obtain the bijection that establishes our claim.
\end{proof}
Eventually we shall show that $|\As(\cP,=,\ell)|$ is of the order of $\ell^{3 p/2}$ when $\cP$ is principal (\cref{Isaac}), but is of a lesser order when $\cP$ is any other contributory partition (\cref{Heinrich}).
Toward this end, we first prove that even, effectual, non-separable partitions have $|\As(\cP,=,\ell)|$ of a lesser order.
\begin{lemma}\label{Heidi}
Let $p$ be a positive integer, let $\cP \in \Part(p)$ be even, effectual, and not separable.
Then for each $\ell \in \N $
\[ |\As(\cP,=,\ell)|\leq |\As(\succeq\!\cP,=,\ell)| \leq \ell^{\ceil{3p/2}-1}.\]
\end{lemma}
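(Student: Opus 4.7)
The plan is to reduce the count to a linear-algebra problem and then bound the rank. The first inequality follows from \cref{Gertrude}\ref{Solomon}. For the second, \cref{Nancy}\ref{Lilian} rewrites
\[
|\As(\succeq\cP,=,\ell)|=\big|\{U\in[\ell]^{\cP}:f_{e}(U)=0\text{ for every }e\in[p]\}\big|.
\]
The number of integer points of $[\ell]^{\cP}$ lying in the $\Q$-kernel of the coefficient matrix $M=(I_{e,P})_{e\in[p],P\in\cP}$ is at most $\ell^{|\cP|-r}$, where $r$ denotes the rank of $M$: one chooses $|\cP|-r$ ``free'' coordinates in $[\ell]$ and observes that the remaining $r$ pivot coordinates are uniquely determined over $\Q$ by the kernel equations. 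It therefore suffices to establish
\[
r\;\geq\;|\cP|-\lceil 3p/2\rceil+1.
\]

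To prove this rank bound, I will exhibit that many linearly independent forms among $\{f_{e}\}_{e\in[p]}$ via a block decomposition indexed by the equation-identicality quotient $\fP=\cP/\sim$. For each $\cC\in\fP$ with equation support $E_{\cC}$, restrict the forms $\{f_{e}\}_{e\in E_{\cC}}$ to the variable block $X_{\cC}=(X_{P})_{P\in\cC}$. Using evenness together with a case analysis of $\cP_{\{e\}}$ in the spirit of \cref{Rosemary} (adapted to work without the local-oddness ingredient of contributoriness), I will lower-bound the block-local rank and verify that its total over $\fP$ matches the $3p/2$ maximum that is saturated by principal partitions in the separable analysis of \cref{water}.

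The new input from non-separability is that there exist distinct $\cC,\cC'\in\fP$ and some $e^{*}\in E_{\cC}\cap E_{\cC'}$, so the form $f_{e^{*}}$ is a cross-block relation whose restriction to at least one of the two blocks cannot lie in the block-local span. Adjoining $f_{e^{*}}$ to the independent set produces the required extra unit of rank, yielding $r\geq|\cP|-\lceil 3p/2\rceil+1$. The main obstacle I anticipate is the bookkeeping needed when several $\sim$-classes pairwise share multiple equations, to ensure that cross-block contributions are not overcounted; a secondary point is the parity of $p$, since \cref{milk} rules out principal partitions when $p$ is odd so that the target bound $\lceil 3p/2\rceil-1=(3p-1)/2$ then carries automatic slack.
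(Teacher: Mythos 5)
Your reduction to a rank bound is correct: the first inequality is \cref{Gertrude}, \cref{Nancy}\ref{Lilian} gives the lattice-point count, and the target $r\geq|\cP|-\ceil{3p/2}+1$ for $r=\dim\Span_\Q(f_0,\ldots,f_{p-1})$ is exactly what is needed. However, the block-decomposition strategy you propose for establishing that rank bound has a structural flaw that, in my reading, cannot be patched by bookkeeping. In the non-separable case the equation supports $E_\cC$ of distinct classes $\cC\in\fP$ overlap, so the same row $e\in[p]$ can contribute to the block-local rank of several $\cC$'s. But a given row can supply at most one unit of rank to the full matrix $M=(I_{e,P})$; adding up block-local ranks over $\fP$ therefore does not lower-bound $\operatorname{rank}(M)$ when rows are shared (it is the opposite inequality that holds in general). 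This additive block argument works in \cref{cookies} precisely because separability makes the $E_\cC$ pairwise disjoint, and it breaks exactly at the point where you need non-separability to enter. Your proposed fix — adjoining a cross-block form $f_{e^*}$ with $e^*\in E_\cC\cap E_{\cC'}$ to gain an ``extra unit of rank'' — does not rescue this: if $e^*$ was already used in computing the block-local ranks for both $\cC$ and $\cC'$, then $f_{e^*}$ is not a new row, and there is no guarantee of any rank gain.

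There is a second, independent gap. You plan to redo a \cref{Rosemary}-style case analysis ``without the local-oddness ingredient of contributoriness,'' but \cref{Rosemary}'s proof leans crucially on satisfiability of $\cP$ (it constructs a concrete $\tau\in\As(\cP,=)$ and derives contradictions from the equations it must satisfy). Here $\cP$ is assumed only even and non-separable; satisfiability is not available and is not implied, so the structural classification of $\fP$-classes that your block bound would rely on is not available either. The paper's proof sidesteps both issues at once: rather than decomposing by $\fP$, it greedily builds a maximal chain of classes $P_1,\ldots,P_m$ and equations $e_1,\ldots,e_m$ with $P_1,P_2$ equation-imbricate, $I_{e_s,P_s}\neq0$, and $I_{e_t,P_s}=0$ for $s<t$; the triangular structure gives $m\leq r$ directly, maximality of $m$ forces $\bigcup_s\{e:I_{e,P_s}\neq0\}=[p]$, and the equation-imbrication of $P_1,P_2$ saves exactly one in the union bound, yielding $|P_1|+\cdots+|P_m|>p$ and hence $|\cP|-m\leq\ceil{3p/2}-1$ using only evenness of the class sizes. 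You should replace the block decomposition and the Rosemary adaptation with a direct argument of this kind.
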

\begin{proof}
The first inequality is due to \cref{Gertrude}, so it remains to prove the second.
Because $\cP$ is not separable, there is a pair $Q,R$ of equation-imbricate classes in $\cP$.
So there must be some $f,g \in [p]$ such that both $Q$ and $R$ intersect $\findexset$, but one and only one of $Q$ and $R$ intersects $\gindexset$.
By the effectuality of $\cP$, this means that $\Co_{Q,f}$ and $\Co_{R,f}$ are both nonzero, but exactly one of $\Co_{Q,g}$ or $\Co_{R,g}$ is nonzero.
So we may let $m \geq 2$ be the largest integer such that there are distinct classes $P_1, P_2, \dots , P_m \in \cP$ and distinct $e_1, e_2, \dots , e_m \in [p]$ such that
\begin{itemize}
\item $P_1$ and $P_2$ are equation-imbricate,
\item $\Co_{P_j,e_j} \not= 0$ for every integer $j$ with $1 \leq j \leq m$, and
\item $\Co_{P_j,e_k} = 0$ for every pair $j,k$ of integers with $1 \leq j < k \leq m$.
\end{itemize}
Now we let $P_{m+1}, \dots, P_{|\cP|} $ be the remaining classes of $\cP$, so that $\cP = \{ P_1, \dots, P_m, P_{m+1}, \dots, P_{|\cP|} \}$. 
For every $j$ with $1 \leq j \leq m$, we must have
\[ |\{ e \in [p]: \Co_{P_j,e}\not=0\}| \leq |P_j|,\]
since the former set is just a collection of some first coordinates of the triples in the latter.
Moreover, since $P_1$ and $P_2$ are equation-imbricate, we have
\[ | \{e \in [p] :  \Co_{P_1,e} \not=0 \} \cup \{ e \in [p]: \Co_{P_2,e}\not=0\}  | \leq  |P_1| + |P_2| -1 .\]
We claim that $\cap_{1 \leq j \leq m} \{ e \in [p] : \Co_{P_j,e} = 0 \}=\emptyset$: otherwise, we could name an element $h$ in this intersection, and then because $\cP$ is effectual, $(\hindexset) \cap P_j=\emptyset$ for all $j \in \{1,2,\ldots,m\}$.  Thus, there would be some class $T$ in $\{P_{m+1},\ldots,P_{|\cP|}\}$ such that $(\hindexset) \cap T\not=\emptyset$ and thus $\Co_{T,h}\not=0$, and then we would violate the maximality of $m$ using classes $P_1,\ldots,P_m,T$ and integers $e_1,\ldots,e_m,h$.
Therefore, we have $\cup_{1 \leq j \leq m} \{ e \in [p] : \Co_{P_j,e} \not= 0 \}=[p]$, so we arrive at the following inequality
\begin{align*}
p
&=  \Bigg|\bigcup_{1 \leq j \leq m}  \{ e \in [p]: \Co_{P_j,e}\not=0\}  \Bigg|\\[3pt]
& \leq |\{e \in [p] :  \Co_{P_1,e} \not=0 \} \cup \{ e \in [p]: \Co_{P_2,e}\not=0\} |  \\ & \qquad\qquad\qquad\qquad\qquad\qquad\qquad\qquad\qquad +  \sums{2< j \leq m} | \{ e \in [p]: \Co_{P_j,e} \not=0 \} | \\
&\leq |P_1| + |P_2| -1 + |P_3| + \dots + |P_m| \\
&< |P_1| + |P_2|+ |P_3| + \dots + |P_m|.
\end{align*}
Since $\cP $ is a partition of $ \indexset $ then we also know that 
\[  |P_1| + \dots + |P_m| + |P_{m+1}| + \dots + |P_{|\cP|}| = 4p. \]
Hence we can conclude that $|P_{m+1}| + \dots + |P_{|\cP|}| < 3p$.
Since each class in the partition $\cP$ is nonempty and of even size, we must have $|\cP|-m \leq \ceil{3 p/2}-1$.

Now $\Fo_{\cP,e_1}, \dots, \Fo_{\cP,e_m}$ is a $\Q$-linearly independent list of polynomials, since for each $j \in \{1,2,\ldots,m\}$ the form $\Fo_{\cP,e_j}$ has a nonzero coefficient $\Co_{P_j,e_j}$ for the variable $X_{P_j}$ but $X_{P_j}$ has a zero coefficient $\Co_{P_j,e_k}$ in the forms $\Fo_{\cP,e_k}$ with $j < k \leq m$.
Thus, 
\begin{align*}
m &= \dim_{\Q} \Span_{\Q} (\Fo_{\cP,e_1}, \dots, \Fo_{\cP,e_m} ) \\
& \leq  \dim_{\Q} \Span_{\Q} (\Fo_{\cP,0}, \dots, \Fo_{\cP,p-1}).
\end{align*}
This implies that  
\[ |\cP|-\dim_{\Q}\Span_\Q(\Fo_{\cP,0},\ldots ,\Fo_{\cP,p-1})  \leq |\cP|-m \leq \ceil{3 p/2}-1.  \]
\cref{Nancy} tells us that $|\As(\succeq\!\cP,=,\ell)|$ is the same as 
\[
|\{U \in [\ell]^{\cP}: \Fo_{\cP,0}(U)=\cdots=\Fo_{\cP,p-1}(U)=0\}|,
\]
and our last inequality tells us that the number of free variables in the system of equations used to define the last set is bounded above by $\ceil{3 p/2}-1$.  Since each free variable must be assigned a value in $[\ell]$, the cardinality of this set is at most $\ell^{\ceil{3 p/2}-1}$.
\end{proof}
Now we upper bound $|\As(\cP,=,\ell)|$ when $\cP$ is nondegenerate and separable, and then use this in \cref{Heinrich} to show that $|\As(\cP,=,\ell)|$ is of order less than $\ell^{3 p/2}$ when $\cP$ is a contributory partition that is non-principal.
\begin{lemma} \label{cookies}
Let $p$ be a positive integer, let $\cP \in \Sep(p)$ be nondegenerate, and let $\fP$ be the quotient of $\cP$. For each $\ell \in \N $ we have 
\[
|\As(\cP,=,\ell)| \leq   |\As(\succeq\!\cP,=,\ell)| \leq \ell^{|\cP|-|\fP|}.
\]
In particular, this bound holds whenever $\cP \in \gelo(p)\cap\Sep(p)$ and (therefore also whenever $\cP \in \Con(p)\cap\Sep(p)$).
\end{lemma}
\begin{proof}
The last statement about partitions in $\gelo(p)\cap\Sep(p)$ and in $\Con(p)\cap\Sep(p)$ will follow from the earlier claims because every GELO partition is nondegenerate by \cref{Eva}, and every contributory partition is GELO.

The first claimed inequality is due to \cref{Gertrude}, so it remains to prove the second.
\cref{Nancy} tells us that $|\As(\succeq\!\cP,=,\ell)|=|\{U \in [\ell]^{\cP}: \Fo_{\cP,0}(U)=\cdots=\Fo_{\cP,p-1}(U)=0\}|$.
From linear algebra, the number of solutions $U \in [\ell]^{\cP}$ to a system of homogeneous linear equations $\Fo_{\cP,0}=\cdots=\Fo_{\cP,p-1}=0$ in $|\cP|$ variables is at most $\ell^{|\cP|-\dim_{\Q}\Span_\Q(\Fo_{\cP,0},\ldots,\Fo_{\cP,p-1})}$.  So it suffices to prove that $\dim_{\Q}\Span_\Q(\Fo_{\cP,0},\ldots,\Fo_{\cP,p-1}) \geq |\fP|$. 

For each $\cC \in \fP$, let $E_{\cC}\subseteq [p]$ be the set of all $e \in [p]$ such that, for every $P \in \cC$, we have $(\eindexset)\cap P\not=\emptyset$.
Because $\cP$ is separable, $\{E_{\cC}: \cC \in \fP\}$ is a partition of $[p]$.
If $\cC$ and $\cD$ are two different elements of $\fP$, then for each $Q \in \cD$, and each $e \in E_{\cC}$, we have $(\eindexset)\cap Q=\emptyset$, and thus $\Co_{Q,e}=0$.  
So the linear forms in $\{\Fo_{\cP,e}: e \in E_{\cC}\}$ can only involve the variables in $\{X_P: P \in \cC\}$, and these linear forms are nonzero by the nondegeneracy of $\cP$.
Thus, $\Span_\Q \{\Fo_{\cP,e}: e \in [p]\}$ is the internal direct sum of the spaces $\Span_\Q \{\Fo_{\cP,e}: e \in E_{\cC}\}$ as $\cC$ runs through $\fP$, and so
\begin{align*}
\dim_\Q \Span_\Q \{\Fo_{\cP,e}: e \in [p]\} 
& = \sum_{\cC \in \fP} \dim_\Q \Span_\Q \{\Fo_{\cP,e}: e \in E_{\cC}\} \\
& \geq |\fP|,
\end{align*}
where the last inequality uses the fact that for each $\cC \in \fP$, the forms in the set $\{\Fo_{\cP,e}: e \in E_{\cC}\}$ are nonzero, and there is at least one form in the set since $E_{\cC}$ cannot be empty.
\end{proof}
\begin{lemma}\label{Heinrich}
Let $p \in \N$ and suppose that $\cP \in \Con(p)$.
If $\cP$ is not principal then
\[
|\As(\cP,=,\ell)| \leq   |\As(\succeq\!\cP,=,\ell)| \leq  \ell^{\ceil{ 3p/2} -1}.
\]
\end{lemma}
\begin{proof}
If $\cP\not\in\Sep(p)$, then since $\cP \in \Con(p)$, it is even because $\Con(p) \subseteq \gelo(p)$ and $\cP$ is effectual by \cref{Stanley}, so then the result is immediate from \cref{Heidi}.
If $\cP\in\Sep(p)$, the result is immediate from \cref{cookies} \and \cref{milk}.
\end{proof}
Now we prove that $|\As(\succeq\!\cP,=,\ell)|$ is of order $\ell^{3 p/2}$ when $\cP$ is principal; this helps us toward our ultimate goal of proving that $|\As(\cP,=,\ell)|$ is of that order.
\begin{lemma}\label{Gracie}
If $\cP \in \Part(p)$ is a principal partition then
\[ \card{\As(\succeq\!\cP,=,\ell)} = \left(\frac{2\ell^3+\ell}{3}\right)^{p/2}. \]
\end{lemma}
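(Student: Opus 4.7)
The plan is to use \cref{Nancy}\ref{Lilian} to recast $|\As(\succeq\cP,=,\ell)|$ as the number of tuples $U \in [\ell]^{\cP}$ satisfying the linear system $f_0(U)=\cdots=f_{p-1}(U)=0$, and then exploit the structure that principality imposes on $\cP$ (as catalogued in \cref{water,Rosemary}) to decompose this system into independent $4$-variable blocks, each of which is counted by \cref{wowzers}.

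First, let $\fP$ be the quotient of $\cP$. By \cref{water}, $\cP$ has $2p$ classes of size $2$ and $\fP$ has $p/2$ classes, each of size $4$. For each $\cC \in \fP$, let $E_{\cC} \subseteq [p]$ be the set of first coordinates appearing in $\cup\cC$. \cref{Rosemary}(ii), together with the fact that each class of $\cC$ has cardinality $2$, forces $|E_{\cC}| = 2$, and the four classes $C_{0,0}, C_{0,1}, C_{1,0}, C_{1,1}$ of $\cC$ are distributed according to the pattern described there. Since the $E_\cC$ partition $[p]$ (because $\cP$ is separable and each $E_\cC$ has size $2$), the variables $X_P$ appearing in $f_e$ for $e \in E_\cC$ are exactly $\{X_{C_{0,0}}, X_{C_{0,1}}, X_{C_{1,0}}, X_{C_{1,1}}\}$, and these variable sets are disjoint across distinct $\cC, \cC' \in \fP$.

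Next, I will verify that for each $\cC \in \fP$ with the labeling above, the equations $\{f_e = 0 : e \in E_\cC\}$ are all equivalent to the single equation
\[
X_{C_{0,0}} + X_{C_{0,1}} = X_{C_{1,0}} + X_{C_{1,1}}.
\]
Indeed, from the definition of $I_{e,P}$ together with the description in \cref{Rosemary}(ii), for each $e \in E_\cC$ we get $I_{e,C_{0,0}} = I_{e,C_{0,1}} = \pm 1$ and $I_{e,C_{1,0}} = I_{e,C_{1,1}} = \mp 1$ (with the sign determined by $s_e \in [2]$), so $f_e$ is $\pm(X_{C_{0,0}} + X_{C_{0,1}} - X_{C_{1,0}} - X_{C_{1,1}})$ for every $e \in E_\cC$. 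Hence the two equations indexed by $E_\cC$ collapse to the displayed one.

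Finally, because the variable sets indexed by distinct $\cC \in \fP$ are disjoint, the full system decouples into $p/2$ independent blocks, one per $\cC \in \fP$. By \cref{wowzers}, each block contributes a factor of $(2\ell^3 + \ell)/3$ to the count of solutions in $[\ell]^{\cP}$, giving
\[
|\As(\succeq\cP,=,\ell)| = \left(\frac{2\ell^3 + \ell}{3}\right)^{p/2}.
\]
The main thing to be careful about is checking the sign bookkeeping for $I_{e,P}$ in the middle step; once the two indices of $E_\cC$ are seen to yield the same equation (up to sign), the rest is just the product decomposition and invocation of \cref{wowzers}.
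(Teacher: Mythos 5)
Your proof is correct. The difference from the paper's own argument is one of route rather than substance: the paper first uses \cref{Esmeralda} and \cref{Horatio} to reduce to the single canonical principal partition $\cQ$, where the linear forms $f_i(X_\cQ)$ are written out explicitly and the identifications $f_i = f_{i+p/2}$ and the variable-disjointness of the blocks are read off directly, whereas you work with an arbitrary principal $\cP$ and extract the same block structure from \cref{water} and \cref{Rosemary}(ii). What the paper's reduction buys is the avoidance of any sign bookkeeping for the coefficients $I_{e,P}$, since everything is literal for $\cQ$; what your approach buys is a shorter chain of lemmas (you bypass \cref{Esmeralda} and the construction in \cref{Horatio}) at the cost of having to verify, as you do, that the two equations indexed by $e \in E_\cC$ agree up to an overall sign. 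Your verification of that sign claim via \cref{Rosemary}(ii) is right: for $e\in E_\cC$, the restriction of each of $C_{0,0},C_{0,1}$ to $\{e\}\times\lindexset$ lies entirely on side $s_e$ and that of $C_{1,0},C_{1,1}$ on side $1-s_e$, so $I_{e,C_{0,0}}=I_{e,C_{0,1}}=-I_{e,C_{1,0}}=-I_{e,C_{1,1}}=\pm 1$, and the two equations collapse to the displayed one, after which the disjointness of the $E_\cC$ (which you correctly justify from separability and $|E_\cC|=2$) gives the product decomposition and \cref{wowzers} finishes.
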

\begin{proof}
Let the partition $\cQ$ and the classes $Q_{e,s,v}$ for $(e,s,v)\in [p/2]\times\lindexset$ be as in \cref{Horatio}, which (along with \cref{Esmeralda}) shows that it suffices for us to prove the desired inequality when $\cP=\cQ$.

By \cref{Nancy}, it then suffices to show that
\[
|\{U \in [\ell]^{\cP}: \Fo_{\cQ,0}(U)=\cdots=\Fo_{\cQ,p-1}(U)=0\}|= \left(\frac{2\ell^3+\ell}{3}\right)^{p/2}. \]
Notice that for $f \in [p]$, we have
\begin{align*}
\Fo_{\cQ,f}
& = \sum_{Q \in \cQ} \Co_{Q,f} X_Q \\
& = \sum_{(e,s,v) \in [p/2]\times\lindexset} \Co_{Q_{e,s,v},f} X_{Q_{e,s,v}} \\
& = \sum_{(e,s,v) \in \{f \bmod{p/2}\} \times\lindexset} \Co_{Q_{e,s,v},f} X_{Q_{e,s,v}} \\
& =  X_{Q_{f\bmod{p/2},0,0}} + X_{Q_{f\bmod{p/2},0,1}} - X_{Q_{f\bmod{p/2},1,0}} - X_{Q_{f\bmod{p/2},1,1}}.
\end{align*}
In particular, notice that if $f \in [p/2]$ then $\Fo_{\cQ,f}=\Fo_{\cQ,f+p/2}$.
Therefore, it suffices to show that
\[
|\{U \in [\ell]^{\cP}: \Fo_{\cQ,0}(U)=\cdots=\Fo_{\cQ,p/2-1}(U)=0\}|= \left(\frac{2\ell^3+\ell}{3}\right)^{p/2}.
\]
Furthermore, notice that if $f$ and $g$ are distinct elements of $[p/2]$, then $\Fo_{\cQ,f}$ and $\Fo_{\cQ,g}$ have no variable in common.
Therefore, the number of solutions $U \in [\ell]^\cP$ to the system of equations $\Fo_{\cQ,0}(U)=\cdots=\Fo_{\cQ,p/2-1}(U)$ is equal to the product of the numbers of solutions for the individual equations.
Each equation has $(2\ell^3+\ell)/3$ solutions by \cref{wowzers}.
\end{proof}
The following result helps us account for the difference between $|\As(\succeq\!\cP,=,\ell)|$ and $|\As(\cP,=,\ell)|$ when $\cP$ is principal, so that we may convert the result in \cref{Gracie} on $|\As(\succeq\!\cP,=,\ell)|$ to a result on $|\As(\cP,=,\ell)|$ in \cref{Isaac}.

\begin{lemma}\label{Rutilia}
Suppose that $p \in \N$ and $\cP\in\Part(p)$ is a principal partition. If $\cR\succ\cP$ then
\[
|\As(\cR,=,\ell)| \leq |\As(\succeq\!\cR,=,\ell)| \leq \ell^{\ceil{ 3p/2} -1}.
\]
\end{lemma}
\begin{proof}
The first inequality is due to \cref{Gertrude}, so it remains to prove the second.
Let $\cQ$ be a partition with $\cR \succeq \cQ \succ \cP$ and such that there is no $\cV$ with $\cQ \succ \cV \succ \cP$.
Because $\cR \succeq \cQ$, by \cref{Julian} it suffices to show that 
\begin{equation}\label{Julie}
|\As(\succeq\!\cQ,=,\ell)| \leq \ell^{\ceil{3p/2} -1}.
\end{equation}

Because there is no $\cV$ with $\cQ \succ \cV \succ \cP$, there must be exactly one class $U$ in $\cQ$ that is the union of two distinct classes $S, T$ in $\cP$, while the remaining classes of $\cQ$ are the remaining classes in $\cP$.
Let $\fP$ be the quotient of $\cP$.
Let $\cS$ and $\cT$ be the classes of $\fP$ such that $S \in \cS$ and $T \in \cT$.

Let us first examine the case when $\cS\not=\cT$.
Then the class $U$ is the union of the two equation-disjoint classes $S$ and $T$, and so $U$ is equation-imbricate with $S$ and with $T$, and indeed with all the classes in $\cS$ and $\cT$.
Since $\cS$ and $\cT$ have four classes each by \cref{water}\ref{broccoli}, three classes from each of $\cS$ and $\cT$ are classes of $\cQ$, and so $\cQ$ is not separable.
Now $\cP$ is principal, hence contributory, and so is effectual by \cref{Stanley}.
Thus all the classes of $\cQ$ that are classes of $\cP$ are effectual.
Furthermore, since $U=S\cup T$ and $S$ and $T$ are equation-disjoint, we see that for every $e \in [p]$ such that $(\eindexset)\cap U\not=\emptyset$, we either have $(\eindexset)\cap S\not=\emptyset$ or $(\eindexset)\cap T\not=\emptyset$ but never both, and since $S$ and $T$ are effectual, this means that either $\Co_{U,e}=\Co_{S,e}\not=0$ or $\Co_{U,e}=\Co_{T,e}\not=0$.
Thus, $U$ is effectual, and so $\cQ$ is effectual.
Also, $\cP$ is even since it is principal, hence contributory, hence GELO, and so $\cQ$ is even since each class of $\cQ$ is a class or union of classes of $\cP$.
Therefore, \eqref{Julie} follows from \cref{Heidi}.

Now let us examine the case when $\cS=\cT$.
Then $S$ and $T$ are equation-identical to each other and to $U=S\cup T$.
This means that $\cQ$ is a separable partition.
We let $\fQ$ be the quotient of $\cQ$.
We set $\cU=(\cS\smallsetminus\{S,T\})\cup \{U\}$, and then $\fQ=(\fP\smallsetminus\{\cS\})\cup\{\cU\}$.
In particular, $|\fQ|=|\fP|$, and since $|\cQ|=|\cP|-1$, this means
\begin{align}
\begin{split}\label{Eunice}
|\cQ|-|\fQ|
& = |\cP|-|\fP|-1 \\
& =\frac{3 p}{2}-1  \\
& =\left\lceil\frac{3 p}{2}\right\rceil-1
\end{split}
\end{align}
where the penultimate equality uses \cref{water}\ref{tomato} and the ultimate uses the fact that $p$ must be even by \cref{milk}.
Also since every class of $\fP$ has cardinality $4$ by \cref{water}\ref{broccoli}, we see that every class of $\fQ$ has cardinality $3$ (in the case of $\cU$) or $4$ (for the rest).
This means that for every $e \in [p]$, the partition $\cQ_{\{e\}}$ of $\eindexset$ has three or four classes, and hence has a singleton class, and so there is some class $Q \in \cQ$ such that $|\Co_{Q,e}|=1$, thus making $f_{\cQ,e}\not=0$.
Thus, $\cQ$ is nondegenerate.
Therefore, \eqref{Julie} follows from \cref{cookies} and \eqref{Eunice}.
\end{proof}
Now we can finally prove that $|\As(\cP,=,\ell)|$ is of order $\ell^{3 p/2}$ when $\cP$ is principal.
\begin{lemma}\label{Isaac}
If $p \in \N$ and $\cP\in\Part(p)$ is principal then 
\[\lim_{\ell \to \infty}  \frac{|\As(\cP,=,\ell)|}{\ell^{3p/2}} =  \left(\frac{2}{3}\right)^{p/2}.\]
\end{lemma}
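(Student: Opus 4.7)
The plan is to combine the already established exact formula for $|\As(\succeq\cP,=,\ell)|$ (from \cref{Gracie}) with the vanishing estimates for strict coarsenings of $\cP$ (from \cref{Rutilia}), using inclusion--exclusion via \cref{Gertrude} to isolate $|\As(\cP,=,\ell)|$.

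First I would invoke \cref{Gertrude}\ref{Solomon}, which gives
\[
|\As(\succeq\cP,=,\ell)| \;=\; |\As(\cP,=,\ell)| \;+\; \sum_{\cR\succ\cP} |\As(\cR,=,\ell)|.
\]
Rearranging and dividing through by $\ell^{3p/2}$ yields
\[
\frac{|\As(\cP,=,\ell)|}{\ell^{3p/2}} \;=\; \frac{|\As(\succeq\cP,=,\ell)|}{\ell^{3p/2}} \;-\; \sum_{\cR\succ\cP}\frac{|\As(\cR,=,\ell)|}{\ell^{3p/2}}.
\]

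Next I would evaluate the limit of the first term. By \cref{Gracie},
\[
\frac{|\As(\succeq\cP,=,\ell)|}{\ell^{3p/2}} \;=\; \left(\frac{2\ell^3+\ell}{3\ell^3}\right)^{p/2} \;=\; \left(\frac{2+1/\ell^2}{3}\right)^{p/2},
\]
which tends to $(2/3)^{p/2}$ as $\ell\to\infty$.

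Finally I would handle the correction sum. Since $\Part(p)$ is a finite set, the sum $\sum_{\cR\succ\cP}$ has only finitely many terms, so the limit commutes with the sum. \cref{Rutilia} tells us that every individual term $|\As(\cR,=,\ell)|/\ell^{3p/2}$ tends to zero, and hence the entire correction sum vanishes in the limit. Putting the two pieces together yields $\lim_{\ell\to\infty}|\As(\cP,=,\ell)|/\ell^{3p/2}=(2/3)^{p/2}$. There is no substantive obstacle: all the real work has been done in \cref{Gracie} and \cref{Rutilia}, and this lemma is essentially a finite inclusion--exclusion cleanup that converts the bound on $\As(\succeq\cP,=,\ell)$ into the analogous statement for $\As(\cP,=,\ell)$.
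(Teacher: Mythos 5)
Your proof is correct and takes essentially the same route as the paper's: both rearrange the identity from \cref{Gertrude}\ref{Solomon}, divide by $\ell^{3p/2}$, and then apply \cref{Gracie} for the main term and \cref{Rutilia} (plus finiteness of $\Part(p)$) to kill the correction sum. The only difference is that you spell out the intermediate algebra a bit more explicitly than the paper does.
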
	
\begin{proof}
By \cref{Gertrude} we have
\[
\card{\As(\cP,=,\ell)}=\card{\As(\succeq\!\cP,=,\ell)} - \sum_{\cR \succ \cP} \card{\As(\cR,=,\ell)}.
\]
Now divide both sides by $\ell^{3 p/2}$ and take the limit as $\ell$ tends to infinity.
Because of Lemmas \ref{Gracie} and \ref{Rutilia}, the limit of the right hand side is $(2/3)^{p/2}$.
\end{proof}

\subsection{Asymptotic moments}\label{Abraham}

Now we know enough about the asymptotic behavior of $\As(\cP,=,\ell)$ as $\ell\to\infty$ for partitions $\cP\in\Con(p)$ to employ \cref{Sanria} to obtain the limiting behavior of the $p$th central moment of $\ssac$.
\begin{theorem}\label{Tatiana}
If $p \in \N$, then  
\[
\lim_{\ell \to \infty}
\frac{\mom{p} \ssac(f)}{\ell^{3 p/2}} =
\begin{cases}
0 & \text{if $p$ is odd,} \\
(p-1)!! \,(16/3)^{p/2} & \text{if $p$ is even,}
\end{cases}
\]
\end{theorem}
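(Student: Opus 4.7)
The plan is to apply \cref{Sanria} and then use the careful decomposition of contributory partitions set up in \cref{Samantha,Conrad}. Writing
\[
\frac{\mom{p}\ssac(f)}{\ell^{3p/2}} = \sum_{\fP \in \Isom(p)} |\fP| \cdot \frac{\Sols(\fP,\ell)}{\ell^{3p/2}},
\]
I would split the sum into two parts according to whether the (common) representative of $\fP$ is principal or not. Since $\Isom(p)$ is a finite set and each $|\fP|$ is a constant independent of $\ell$, the limit can be passed term by term through the finite sum.

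For the non-principal part, \cref{Heinrich} provides the uniform bound $|\As(\cP,=,\ell)| \leq \ell^{\lceil 3p/2\rceil - 1}$ for every non-principal $\cP \in \Con(p)$, so $\Sols(\fP,\ell)/\ell^{3p/2} \to 0$ for each such $\fP$. Summing over the finitely many non-principal classes, these contributions vanish in the limit.

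For the principal part, I split on parity. When $p$ is odd, \cref{milk} tells us $\Con(p)\cap\Sep(p)$ contains no principal partitions, so the entire sum reduces to the non-principal part and the limit is $0$. When $p$ is even, \cref{Henrietta} tells us that all principal partitions form a single isomorphism class $\fP_0 \in \Isom(p)$ with $|\fP_0| = (p-1)!!\,8^{p/2}$, and \cref{Isaac} gives $\Sols(\fP_0,\ell)/\ell^{3p/2} \to (2/3)^{p/2}$. Multiplying,
\[
\lim_{\ell\to\infty} \frac{\mom{p}\ssac(f)}{\ell^{3p/2}} = (p-1)!!\,8^{p/2} \cdot (2/3)^{p/2} = (p-1)!!\,(16/3)^{p/2},
\]
as claimed.

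Given the scaffolding already in place, there is essentially no obstacle at this final stage: the real work was done in \cref{Heinrich} (showing principality is asymptotically dominant), \cref{Henrietta} (counting the single principal orbit), and \cref{Isaac} (evaluating the principal leading term). What remains is just to assemble the pieces through the finite-sum formula of \cref{Sanria} and perform the arithmetic $8^{p/2} \cdot (2/3)^{p/2} = (16/3)^{p/2}$.
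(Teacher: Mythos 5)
Your proposal follows essentially the same route as the paper: apply \cref{Sanria} to express the central moment as a finite sum over isomorphism classes, kill the non-principal terms via \cref{Heinrich}, invoke \cref{milk} to dispose of the odd case, and for even $p$ combine \cref{Henrietta} with \cref{Isaac} to compute $(p-1)!!\,8^{p/2}\cdot(2/3)^{p/2}=(p-1)!!\,(16/3)^{p/2}$.

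One small omission: the theorem is stated for all $p\in\N=\{0,1,2,\dots\}$, and both \cref{Henrietta} and \cref{Isaac} are stated only for \emph{positive} $p$. So your argument for the even case does not cover $p=0$, where (with $\N$ containing $0$) the empty partition of $\emptyset$ is vacuously principal, but the orbit-counting and leading-term lemmas do not apply. The paper handles this separately by observing that $\mom{0}\ssac(f)=1$ identically and $(-1)!!\cdot(16/3)^0=1$. Aside from this edge case, your assembly of the pieces is correct and matches the paper's.
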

so that
\[
\lim_{\ell \to \infty} \ell^{p/2}
\mom{p} \ADF(f) =
\begin{cases}
0 & \text{if $p$ is odd,} \\
(p-1)!! \,(16/3)^{p/2} & \text{if $p$ is even.}
\end{cases}
\]
\begin{proof}
By \cref{Sanria}
\[
\mom{p} \ssac(f) = \sum_{\fP \in \Isom(p)} |\fP| \Sols(\fP,\ell).
\]
If $\fP$ is an isomorphism class containing non-principal partitions, then
\[
\lim_{\ell\to\infty} \frac{|\Sols(\fP,\ell)|}{\ell^{3 p/2}} =0
\]
by \cref{Heinrich}.
Since \cref{milk} tells us that there are no principal partitions when $p$ is odd, this proves our result about $\ssac$ in this case.
On the other hand, if $p$ is positive and even, then \cref{Henrietta} tells us that there is a single isomorphism class $\fQ$ containing all $(p-1)!! 8^{p/2}$ principal partitions, and \cref{Isaac} tells us that
\[
\lim_{\ell\to\infty} \frac{|\Sols(\fQ,\ell)|}{\ell^{3 p/2}} =\left(\frac{2}{3}\right)^{p/2}.
\]
Thus, if we divide $\mom{p} \ssac(f)$ by $\ell^{3 p/2}$ and take the limit as $\ell$ tends to $\infty$, we obtain $(p-1)!! 8^{p/2} (2/3)^{p/2}$.
Finally, if $p=0$, then the $p$th moment $\mom{p} \ssac(f)$ is trivially $1$ and $(-1)!!=1$, so the desired result also holds.
The result for $\ADF$ follows from that for $\ssac$ by equation \eqref{Natasha} from the Introduction.
\end{proof}
We can now use the variance of the demerit factor from \cref{Albert} to obtain the limiting standardized moments of the demerit factor.
\begin{corollary}\label{Indy}
Let $p$ be a nonnegative integer.  Then
\[
\lim_{\ell \to \infty} \smom{p} \ADF(f) = \lim_{\ell \to \infty} \smom{p} \ssac(f) =
\begin{cases}
0 & \text{if $p$ is odd,} \\
(p-1)!! & \text{if $p$ is even.}
\end{cases}
\]
\end{corollary}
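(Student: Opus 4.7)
The plan is to deduce this corollary from \cref{Tatiana} together with the asymptotic behavior of the variance, either as supplied by \cref{Albert} or, equivalently, by \cref{Tatiana} applied with exponent $2$.

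First I would note that $\smom{p}\ADF(f)=\smom{p}\ssac(f)$ holds for every positive integer $\ell$ and every $p$. This is already essentially recorded just after \eqref{Natasha}: since $\mom{p}\ADF(f)=\mom{p}\ssac(f)/\ell^{2p}$, the factor $\ell^{2p}$ in the numerator of the standardized moment is cancelled by the $(\ell^{4})^{p/2}=\ell^{2p}$ arising from $(\mom{2}\ADF(f))^{p/2}$ in the denominator. It therefore suffices to compute $\lim_{\ell\to\infty}\smom{p}\ssac(f)$.

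For this, I would rewrite
\[
\smom{p}\ssac(f)=\frac{\mom{p}\ssac(f)/\ell^{3p/2}}{\bigl(\mom{2}\ssac(f)/\ell^{3}\bigr)^{p/2}}
\]
and pass to the limit in numerator and denominator separately. By \cref{Tatiana} at exponent $p$, the numerator tends to $0$ when $p$ is odd and to $(p-1)!!\,(16/3)^{p/2}$ when $p$ is even. By \cref{Tatiana} at exponent $2$ (or directly from \cref{Albert} after multiplying by $\ell^{4}$), we have $\mom{2}\ssac(f)/\ell^{3}\to 16/3$, so the denominator tends to $(16/3)^{p/2}$, which is strictly positive; this justifies taking the limit in the quotient. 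For odd $p$ we obtain $0$, and for even $p\geq 2$ the factors of $(16/3)^{p/2}$ cancel, leaving $(p-1)!!$. The degenerate case $p=0$ requires a separate, trivial check: both $\mom{0}\ssac(f)$ and $(\mom{2}\ssac(f))^{0}$ equal $1$, so $\smom{0}\ssac(f)=1=(-1)!!$ under the usual convention on the double factorial.

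There is no genuine obstacle here, since all the combinatorial effort has been absorbed into \cref{Tatiana}; what remains is a bookkeeping check that the scaling exponents line up correctly. The key numerical coincidence is that the $p$th central moment of $\ssac(f)$ has order $\ell^{3p/2}$ while the variance has order $\ell^{3}$, and $(\ell^{3p/2})/(\ell^{3})^{p/2}=1$, so the constant $(16/3)^{p/2}$ drops out cleanly in the standardization and the limit matches that of the standard normal distribution.
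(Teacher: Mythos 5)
Your proposal is correct and follows essentially the same route as the paper: deducing the limit from \cref{Tatiana} together with \cref{Albert} (or Tatiana at $p=2$) and noting that $\ADF$ and $\ssac$ have identical standardized moments. The paper's proof is simply more terse; your explicit bookkeeping of the $\ell^{3p/2}$ scaling and the cancellation of $(16/3)^{p/2}$ is the same calculation spelled out.
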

\begin{proof}
The limit for the standardized moment of $\ADF(f)$ follows from \cref{Tatiana} and \cref{Albert}, and since $\ADF(f)=-1+\ssac(f)/\ell^2$, the standardized moments of $\ssac(f)$ are the same as those of $\ADF(f)$.
\end{proof}
\begin{remark}
The $p$th central moment of a standard normal distribution is $0$ if $p$ is odd and $(p-1)!!$ if $p$ is even.
We see that the limiting normalized central moments in \cref{Indy} have the same values.
\end{remark}

\section*{Acknowledgement}

The authors thank Bernardo \'Abrego and Silvia Fern\'andez-Merchant for helpful discussions and suggestions.

\end{document}